\titlespacing*{\section}{0pt}{0.5\baselineskip}{0.3\baselineskip}
\def\bSig\mathbf{\Sigma}
\newcommand{\vect}[1]{\boldsymbol{#1}}
\title[Asymmetric predictability in causal discovery]{Asymmetric predictability in causal discovery: an information theoretic approach}
 \author{Soumik Purkayastha$^{*}$\email{soumikp@umich.edu},
 Peter X.-K. Song$^{**}$\email{pxsong@umich.edu} \\
 Department of Biotatistics, University of Michigan, Ann Arbor, MI, U.S.A.}
\begin{document}

%  This will produce the submission and review information that appears
%  right after the reference section.  Of course, it will be unknown when
%  you submit your paper, so you can either leave this out or put in 
%  sample dates (these will have no effect on the fate of your paper in the
%  review process!)

%  These options will count the number of pages and provide volume
%  and date information in the upper left hand corner of the top of the 
%  first page as in published papers.  The \pagerange command will only
%  work if you place the command \label{firstpage} near the beginning
%  of the document and \label{lastpage} at the end of the document, as we
%  have done in this template.

%  Again, putting a volume number and date is for your own amusement and
%  has no bearing on what actually happens to your paper!  

\pagerange{\pageref{firstpage}--\pageref{lastpage}} 
\volume{ }
\pubyear{}
\artmonth{}

%  The \doi command is where the DOI for your paper would be placed should it
%  be published.  Again, if you make one up and stick it here, it means 
%  nothing!

%\doi{}

%  This label and the label ``lastpage'' are used by the \pagerange
%  command above to give the page range for the article.  You may have 
%  to process the document twice to get this to match up with what you 
%  expect.  When using the referee option, this will not count the pages
%  with tables and figures.  

\label{firstpage}

%  put the summary for your paper here

\begin{abstract}
% Causal discovery in observational studies pose great challenges in scientific research where randomized trials or intervention-based studies are not feasible. This project is motivated by a cohort study which investigates if environment-triggered changes in epigenetic markers influence cardiovascular function or if the converse is true. We develop an information theoretic causal discovery framework of predictive asymmetry - whether $X$ is a stronger predictor than $Y$ or vice-versa. Our framework introduces a new statistic called the Directed Mutual Information ($DMI$). The $DMI$ detects complex association in bivariate $(X, Y)$ and quantifies predictive asymmetry, aiding our notion of causality. Our framework relies on density estimation, that is done by a Fourier transformation-based approach. Our method is many magnitudes faster than the classical bandwidth-based density estimation method, while maintaining comparable error performance, making our method scalable. We establish key large-sample properties of our framework by developing a new data-splitting inference technique and evaluate its performance through simulation studies and a real data example in the motivating study.
Causal investigations in observational studies pose a great challenge in research where randomized trials or intervention-based studies are not feasible. We develop an information geometric causal discovery and inference framework of "predictive asymmetry". For $(X, Y)$, predictive asymmetry enables assessment of whether $X$ is more likely to cause $Y$ or vice-versa. The asymmetry between cause and effect becomes particularly simple if $X$ and $Y$ are deterministically related. We propose a new metric called the Directed Mutual Information ($DMI$) and establish its key statistical properties. $DMI$ is not only able to detect complex non-linear association patterns in bivariate data, but also is able to detect and infer causal relations.  Our proposed methodology relies on scalable non-parametric density estimation using Fourier transform. The resulting estimation method is manyfold faster than the classical bandwidth-based density estimation. We investigate key asymptotic properties of the $DMI$ methodology and a data-splitting technique is utilized to facilitate causal inference using the $DMI$. Through simulation studies and an application, we illustrate the performance of $DMI$.

%is a foundational problem in scientific research since investigation by means of interventions or randomized experiments is often not feasible. Motivated by Shannon’s seminal work on information theory, we postulate a causal discovery framework of predictive asymmetry. Our framework relies on a new statistic called the directed mutual information (DMI). The DMI is not only able to detect associations in bivariate data (X, Y) with high statistical power, but also is able to detect predictive asymmetry – whether X is the better predictor or Y, or conversely, whether Y is the better predictor. Our framework relies on density estimation, which is a tricky statistical problem with high computational burden. We use a Fourier transform-based estimation method that is many magnitudes faster than typical kernel-based methods, while maintaining comparable error performance, thereby making our method applicable to large datasets. We present key large-sample properties associated with this framework by invoking a new data-splitting technique and show that the new framework has attractive estimation and inference properties. Finally, we illustrate its performance through extensive simulations and real data example on causal discovery in an epigenetic study.
\end{abstract}

%  Please place your key words in alphabetical order, separated
%  by semicolons, with the first letter of the first word capitalized,
%  and a period at the end of the list.
%

\begin{keywords}
Association, Data splitting inference, Epigenetics, Fast Fourier transformation, Kernel estimation.
\end{keywords}

%  As usual, the \maketitle command creates the title and author/affiliations
%  display 

\maketitle

%  If you are using the referee option, a new page, numbered page 1, will
%  start after the summary and keywords.  The page numbers thus count the
%  number of pages of your manuscript in the preferred submission style.
%  Remember, ``Normally, regular papers exceeding 25 pages and Reader Reaction 
%  papers exceeding 12 pages in (the preferred style) will be returned to 
%  the authors without review. The page limit includes acknowledgements, 
%  references, and appendices, but not tables and figures. The page count does 
%  not include the title page and abstract. A maximum of six (6) tables or 
%  figures combined is often required.''

%  You may now place the substance of your manuscript here.  Please use
%  the \section, \subsection, etc commands as described in the user guide.
%  Please use \label and \ref commands to cross-reference sections, equations,
%  tables, figures, etc.
%
%  Please DO NOT attempt to reformat the style of equation numbering!
%  For that matter, please do not attempt to redefine anything!

\section{Introduction}
\label{sec:intro} Discovery of causal relationships from observational data is a cornerstone of scientific research. Given bivariate observations $(X, Y)$, a fundamental question is whether $X$ causes $Y$ or, alternatively, $Y$ causes $X$. Even under simplifying assumptions of no confounding, no feedback loops, and no selection bias, a direct assessment of a bivariate causal relationship is a naturally hard problem. \citep{spirtes_2016}. 

% Some approaches bearing an implicit evidence of association are of great interest. \cite{Galton_1886} introduced the term ``regression'' and subsequently, \cite{KPearson1895} proposed the correlation coefficient $r$, thereby giving birth to statistical science. Besides Pearson’s $r$, other measures such as Spearman’s $\rho$ and Kendall’s $\tau$ are also popular in standard statistical inference problems. However, most existing dependence measures focus on association between random variables and not on direction of dependence. In absence of causal knowledge, one may investigate subtle statistical patterns in the data to find potential causal direction(s) \citep{mooij_2016}. In other words, direction in a bivariate relation is a key attribute mirroring an existing causality. 

In this paper we propose a new causal discovery statistic within the framework of Shannon's information theory \citep{Shannon1948} 
%and fill in this significant technical gap 
by means of a new statistic, the directed mutual information $(DMI)$.  $DMI$ serves as a statistical test for independence and also quantifies a notion of  ``asymmetric predictability'' or ``predictive asymmetry'' for bivariate $(X, Y)$. Not only are we able to estimate this new statistic in a computationally fast and robust manner, but also we establish a framework for statistical inference using a new data-splitting technique. 

Arguably, a key question in bivariate causal discovery methods is 
whether $X$ is the response variable and $Y$ is the predictor variable (or converse). Utilizing information theoretic measures, we propose the directed mutual information coefficients $DMI(X|Y)$ and $DMI(Y|X)$, which enable us to not only test for independence but also quantify ``predictive asymmetry'' for $(X, Y)$ - thereby establishing a sense of asymmetry in bivariate associations. 

Asymmetric predictability invokes the understanding that conditional entropy $H(Y \lvert X)$ measures the amount of uncertainty remaining in $Y$ after learning $X$. It is known that $H(X|Y) > \ (\text{or }<) \ H(Y|X)$ implies conditioning on $X$ and predicting $Y$ yields less (or more) uncertainty, thereby establishing a sense of ``asymmetric predictability''.
%Our causal discovery framework relies on information theory \citep{Shannon1948}.
% \sout{\textit{Mutual information}, denoted by $MI(X,Y)$, is widely used for its ability to detect both linear and nonlinear dependence, which is appealing in a broad range of complex relationships. \textit{Entropy}, given by $H(X)$, measures the uncertainty of a random variable $X$. The uncertainty of a random variably $X$ given knowledge of another random variable $Y$ is measured by the \textit{conditional entropy}, given by $H(X \lvert Y)$. It is known that $H(X|Y) > \ (\text{or }<) \ H(Y|X)$ implies conditioning on $X$ and predicting $Y$ yields less (or more) uncertainty, thereby establishing a sense of ``asymmetric predictability''.} 

% Arguably, a key question in bivariate causal discovery methods is 
% whether $X$ is the response variable and $Y$ is the predictor variable (or converse). Utilizing information theoretic measures, we propose the directed mutual information coefficients $DMI(X|Y)$ and $DMI(Y|X)$, which enable us to not only test for independence but also quantify ``predictive asymmetry'' for $(X, Y)$ - thereby establishing a sense of asymmetry in bivariate associations.   

%We carry out simulation studies to analyze this framework. Further, as 
Our methodology development is motivated by a cohort study - the Early Life Exposures in Mexico to Environmental Toxicants (ELEMENT) cohort \citep{Hernandez_Avila_1996}. The study aims to investigate the direction of influence between DNA methylation (DNAm) alterations and cardiovascular outcomes. Specifically, in genes (namely, \emph{FGF5}, \emph{ATP2B1} and \emph{PRDM8}) that are linked with blood pressure (BP), researchers wish to investigate whether DNAm (specifically, cytosine-phosphate-guanine (CpG) methylation) status influences change in BP or if the converse is true \citep{dicorpo_2018}. Focusing on the CpG sites of the three candidate genes, we apply our $DMI$ framework to analyze data on DNAm and BP in the epigenetic ELEMENT study. Our method unveils predictive asymmetry exists between CpG sites and BP variation. This new finding provides a sense of directionality in an association study between BP variation and epigenetic biomarkers, paving the way for future advancements in individualized risk assessments and even therapeutic targets.

%epigenetic mechanisms (like DNA methylation) that help explain biological processes. Recent studies note that cytosine-phosphate-guanine (CpG) methylation of genes are linked to blood pressure (BP) \citep{Kazmi_2020}, with Mendelian randomization studies noting that methylation variations are consequences of cardiovascular function, such as BP, rather than the cause \citep{dicorpo_2018}. We analyze the association of CpG methylation of some candidate genes  with BP in the ELEMENT study. We also investigate whetherpredictive asymmetry exists between methylation sites and BP variation within our proposed framework of asymmetric predictability. Our study yields insights on the effect of BP variation on epigenetic structure,

The organization of this paper is as follows. Section 2 presents the formulation of our framework. Section 3 presents the estimation methodology and theoretical guarantees. In Section 4, we present simulation studies used to evaluate the finite-sample performance of our method. We apply our method to analyze the motivating ELEMENT data example in Section 5. Finally, we make some concluding remarks in Section 6. Detailed proofs of the large sample properties are included in the Appendix.

 %Mendelian randomization studies report that methylation variations are consequences of cardiovascular function rather than the cause \citep{dicorpo_2018, Dekkers_2016}.  

%With these well-known information theoretic quantities, we construct the directed mutual information coefficients $DMI(X|Y)$ and $DMI(Y|X)$. In causal discovery problems, $X$ and $Y$ are typically dependent, and there are no other observed variables to condition on. $DMI$ may be used to test for independence, since testing $H_0: (X, Y)$ are independent, $H_0: DMI(X|Y) = 0$ and $H_0: DMI(Y|X) = 0$ are equivalent. Further, when $X$ and $Y$ are dependent, comparing $DMI(X|Y) \lessgtr DMI(Y|X)$, is equivalent to comparing $H(X|Y) \lessgtr H(Y|X)$. Hence, our proposed framework provides a test for independence as well as a quantification of ``predictive asymmetry'' for $(X, Y)$, thereby establishing a sense of direction in bivariate associations. 

\section{Directed Mutual Information $(DMI)$}
\label{sec:dmi}

\textbf{\emph{Mutual information is copula entropy.}} Let $X$ and $Y$ be two random variables with joint density function $f_{XY}$. Let $f_X$ and $f_Y$ be the marginal densities of $X$ and $Y$, respectively. The mutual information $MI(X, Y)$ \citep{Shannon1948} is:
\begin{equation} \label{eq:mi}
MI(X, Y) = E_{XY}\left\{ \log \frac{f_{XY}(X, Y)}{f_X(X) f_Y(Y)} \right\},
\end{equation} where $E_{XY}$ denotes expectation over $f_{XY}$. Some properties that make $MI$ an attractive measure of complex dependence include: (i) $MI \geq 0$ with equality if and only if $X$ and $Y$ are independent and (ii) a larger value of $MI$ indicates a stronger dependence between two variables.%; and (iii)  $MI$ is a self-equitable measure of association \citep{Kinney2014}. 
We consider an equivalent formulation of $MI$ by invoking the marginal transformations $U = F_X(X) \sim \mathcal{U}(0, 1)$ and $V = F_Y(Y) \sim \mathcal{U}(0, 1)$, where $F_X$ and $F_Y$ are the cumulative distribution functions (CDF) of $X$ and $Y$ respectively. According to the Sklar's theorem \citep{Sklar1959FonctionsDR}, we have $MI(X, Y) = MI(U, V) = E_{\vect{Z}}\left\{ \log c(\vect{Z}) \right\}$, where $\vect{Z} = (U, V) \in \left[0, 1 \right]^2$ and $c(\vect{z})$ is the unique copula density function defined on the unit square $[0, 1]^2$. Since we focus only on the joint copula density $c$, instead of $f_{XY}, f_X$ and  $f_Y$, the computational burden of estimation methods based on previous understanding of $MI$ is greatly reduced \citep{Ma2011}. Assuming knowledge of the estimator $\hat{c}$ of $c$, $MI$ is the sample mean of the log of $\hat{c}$ at (transformed) data points $\left[ \vect{Z}_j =  \left\{\hat{F}_{X}(X_j), \hat{F}_{Y}(X_j) \right\}\right]_{j=1}^n$ on the compact domain $[0, 1]^2$, where $\hat{F}_X$ and $\hat{F}_Y$ are the empirical CDFs (ECDF) of $X$ and $Y$ respectively.

\textbf{\emph{Marginal and conditional entropies.}} The marginal entropy of $X$ is defined as $H(X) = E_X\left\{ - \log f_X(X) \right\},$ where $E_{X}$ denotes expectation over $f_X$, while the conditional entropy of $X$ conditioned on $Y$ is given by $H(X|Y) = E_{XY} \left\{ - \log f_{XY}(X, Y)/f_Y(Y) \right\}.$ These quantities are related to $MI(X, Y)$ via the following identity:
\begin{equation}
\label{eqn:identity} 
\begin{aligned}
H(X,Y) &= MI(X, Y) + H(X|Y) + H(Y|X). 
\end{aligned}
\end{equation} where $H(X, Y) = E_{XY} \left\{- \log f_{XY}(X, Y) \right\}$ is the joint entropy of $(X, Y)$. %Under independence of  $X$ and $Y$, we have  $H(X|Y) = H(X)$ and $H(Y|X) = H(Y)$. %Additionally, $H(Y|X) = 0$ if and only if $Y$ is completely determined by the value of $X$. 
If we assume knowledge of the estimator $\hat{f}_X$, estimation of $H(X)$ is equivalent to evaluating the sample mean of the log of the estimator $\hat{f}_X$ at data points $\left\{ {X}_j \right\}_{j=1}^n$. 

\textbf{\emph{Entropy ratio.}} The entropy decomposition in Equation \ref{eqn:identity} resembles Fisher's seminal decomposition of the total variation into the sum of both within and between variations in the analysis of variance (ANOVA). Equation \ref{eqn:identity} motivates the development of a metric to quantify asymmetric predictability: the total entropy $H(X, Y)$ may be decomposed to establish a sense of ``symmetric  behaviour'' through $MI(X, Y)$ and ``asymmetric behaviour'' through $H(X|Y)$ and $H(Y|X)$. Under our proposed tenet of asymmetric predictability, an asymmetry between $X$ and $Y$ emerges if $H(Y|X) \neq H(X|Y)$. Comparing $H(X|Y) > \ (\text{or }<) \ H(Y|X)$ reveals if conditioning on $X$ and predicting $Y$ yields less (or more) uncertainty.  As a result, it highlights which of $X$ or $Y$ has a more dominant predictive role to play in a bivariate relationship.
We define the entropy ratio of $X$ relative to $Y$, $ER(X|Y)$:
\begin{equation}
    \label{eq:relcondent}
    ER(X|Y) = \frac{\exp\{H(X|Y)\}}{ \left[\exp\{H(X|Y)\} + \exp\{H(Y|X)\} \right]},
\end{equation}
where the exponential transformation guarantees all components of $ER(X|Y)$ are positive. Note that (i) $ER(X|Y) = ER(Y|X) = 1/2$ if and only if $H(X|Y) = H(Y|X)$ and (ii) $ER(X|Y) \lessgtr 1/2$ implies $H(X|Y) \lessgtr H(Y|X)$, with $H(Y|X) < H(X|Y)$ establishing $X$ as the ``dominant predictor variable'' that exerts more ``influence'' on $Y$. 

The $ER$ has a nice transitive property: if we assume $X$ is more predictive than $Y$, i.e., $ER(X|Y) > 1/2$ and $Y$ is more predictive than $Z$, i.e., $ER(Y|Z) > 1/2$, a little algebra yields $ER(X|Z) > 1/2$, i.e., an ordering of $ER$ between $(X, Y)$ and $(Y, Z)$ provides insight on asymmetric predictability in a third pair, i.e., $(X, Z)$ within our proposed framework. 

%Metrics similar to our $ER(X|Y)$ in (\ref{eq:relcondent}) are scattered in the statistical literature.  One example is the work on modeling asymmetric exchange of citations between two statistical research journals \citep{Varin2015}. The authors aim to build a knowledge network by modeling how much `influence' one journal $J_1$ exerts on another journal $J_2$ by means of citations using the Stigler model \citep{Stigler1994}. The proposed model incorporates both the total number of citations jointly exchanged between two journals (a symmetric measure) as well as the ``export score'' that either journal has relative to the other (this score is asymmetric by construction).
% \begin{prop}
% Transitive property of $ER$: Let us assume for three variables $\left(X, Y, Z\right)$, where the three pairwise $MI(X, Y),\ MI(Y,Z),\ MI(X, Z)$ are all non-zero. We assume $X$ is more predictive than $Y$, i.e., $ER(X|Y) > 1/2$ and $Y$ is more predictive than $Z$, i.e., $ER(Y|Z) > 1/2$. This yields $ER(X|Z) > 1/2$.
% \end{prop}
% \begin{proof}
% The inequality $ER(X|Y) > 1/2$ implies $H(X|Y) > H(Y|X) \iff H(X) > H(Y)$ and the inequality $ER(Y|Z) > 1/2$ establishes $H(Y) > H(Z)$. Together, this yields $H(X) > H(Z)$, which further implies $H(X|Z) > H(Z|X) \iff ER(X|Z) > 1/2$. In other words, an ordering of $ER$ between $(X, Y)$ and $(Y, Z)$ provides insight on asymmetric predictability in a third pair, i.e., $(X, Z)$ within our proposed framework.
% \end{proof}

\textbf{\emph{Directed mutual information.}} We now propose a new measure called the \textit{directed mutual information} ($DMI$) that can detect dependence between two random variables and also test for asymmetric predictability  between $X$ and $Y$. The measure is defined by:
\begin{equation}
    \label{eq:dmi}
    DMI\left(X\middle| Y\right) = MI(X, Y) \times ER(X|Y). 
\end{equation} Note that $DMI$ is not a symmetric measure, i.e., $DMI(X|Y)$ and $DMI(Y|X)$ are not necessarily identical. Intuitively, our new estimator $DMI(X|Y)$ is a scaled function of $MI$, which is an effective tool to capture symmetric association. The scaling factor $ER(X|Y)$ aims to capture asymmetric behaviour between $X$ and $Y$ by comparing their relative predictive performance when they are associated. Some properties of $DMI$ include (i) $DMI\left(X\middle| Y\right) \geq 0$ and $DMI\left(Y\middle| X\right) \geq 0$, with equality if and only if $X$ and $Y$ are independent and (ii) $DMI\left(X\middle| Y\right) \lessgtr DMI\left(Y\middle| X\right)$ implies $ER\left(X\middle| Y\right) \lessgtr ER\left(Y\middle| X\right)$. While (i) forms the basis for testing statistical independence, under our proposed tenet of asymmetric predictability, (ii) naturally establishes whether $X$ or $Y$ is the ``dominant predictor variable''. We define 
\begin{equation}
    \label{eq:del}
    \Delta = DMI\left(X\middle| Y\right) - DMI\left(Y\middle| X\right),
\end{equation} where $\Delta > 0$ (or $\Delta < 0$) establishes $X$ as the dominant predictor variable over $Y$, resulting in a certain asymmetry between the two variables (or $Y$ as the dominant predictor variable over $X$). 

\section{Estimation and inference for $DMI$ and $\Delta$}
\label{sec:methodology}
In order to estimate the $DMI$, we first tackle the problem of estimating a continuous density from a set of bivariate data points. Kernel density estimators (KDEs) are commonly used techniques for estimating the probability density function (PDF). The KDE method requires specification of some kind of bandwidth or a kernel function with a specific bandwidth. Bandwidth tuning is a tricky issue  \citep{Silverman1986} and is computationally intensive, since it requires repeated density estimation.  The difficulty compounds for higher dimensions \citep{Duong2005}. %These methods have a number of drawbacks, such as poor performance for large sample sizes and/or complex distributions and high computational expense.
A review of automatic selection methods \citep{Heidenreich2013} recommends a variety of different methods, depending on data set characteristics (including sample size, distribution smoothness, and skewness) and thus, is hard to implement in practise.

An attractive alternative to the kernel-based estimation strategy involves Fourier transformation; \cite{Bernacchia2011} present a estimator for the univariate case while \cite{OBrien2016} extend the method to higher dimensions. This estimation process is demonstrably orders of magnitude faster than comparable, state-of-the-science density estimation packages in R \citep{OBrien2016} while maintaining comparable statistical error performance. This extension provides a data-driven bivariate PDF estimator that is optimal, fast, and unencumbered by the need for user-selected parameters. The estimator is called the self-consistent estimator (SCE), and is used to estimate $DMI$ and $\Delta$.  

\textbf{\emph{The self-consistent density estimator.}} Let us consider a random bivariate sample denoted by $\mathcal{S} = \{\vect{X}_j: \vect{X}_j \in \mathcal{X}, j = 1,2,\ldots,n\}$ from density $f$ with support $\mathcal{X}$ (without loss of generality, $\mathcal{X} = \mathbb{R}^2$). We assume $f$ belongs to the Hilbert space of square integrable functions, given by $\mathcal{L}^2 = \left\{f: \int f^2(\vect{x})d\vect{x} < \infty \right\}.$
We consider the bivariate self-consistent estimator (SCE)  $\hat{f} \in \mathcal{L}^2$. In order to define $\hat{f}$, we require a kernel function $K$, which belongs to the class of functions given by
\begin{equation*}
    \mathcal{K} := \left\{K: K(\vect{x}) \geq 0, K(\vect{x}) = K(-\vect{x}) \ \forall\ \vect{x}; \int K (\vect{t})d\vect{t} = 1 \right\}. 
\end{equation*} Specifically, $\hat{f}$  is given by the convolution of a kernel $K$ and the set of delta functions centered on the dataset as follows: 
\begin{equation}
\begin{aligned}
 \label{eqn:kde}
     \hat{f}(\vect{x}) &\equiv {n}^{-1} \sum_{j = 1}^{n} K(\vect{x} - \vect{X}_j) \\
     &=  {n}^{-1} \sum_{j = 1}^{n} \int_{\mathcal{X}} K(\vect{s}) \delta(\vect{x} - \vect{X}_j - \vect{s})d\vect{s}, \quad \vect{x} \in \mathcal{X} 
\end{aligned} 
\end{equation} where $\delta(\vect{x})$ is the Dirac delta function \citep{Kreyszig_Erwin2020-07-21}. %For more discussion on  $\delta(\vect{x})$ see Section S1.1 of the Supplement. 
%Intuitively, $\delta(\vect{x})$ may be viewed as the degenerate bivariate normal distribution as all diagonal elements of the covariance matrix approach zero. 
Our aim is identify the optimal kernel $\hat{K}$, where ``optimality'' is intended as minimising the mean integrated square error (\emph{MISE}) between the true density $f$ and the estimator $\hat{f}$:
\begin{equation}
\begin{aligned}
 \label{eqn:mise}
     \hat{K} &= \underset{{K} \in \mathcal{K}}{\text{argmin}} \ \text{\emph{MISE}}(\hat{f}, f) \\ &= \underset{{K} \in \mathcal{K}}{\text{argmin}} \  \mathbb{E}\left[\int_{\mathbb{R}^2}\{\hat{f}(\vect{x})-f(\vect{x})\}^{2} d\vect{x}\right],
\end{aligned}
\end{equation} where the $\mathbb{E}$ operator denotes taking expectation over the entire support of $f$. The SCE in Equation \ref{eqn:kde} may be represented equivalently by its inverse Fourier transform pair, $\hat{\phi} \in \mathcal{L}^2$:
\begin{equation}
\begin{aligned}
 \label{eqn:kde_fouriertransformed}
     \hat{\phi}(\mathbf{t}) &= \mathcal{F}^{-1} \big(\hat{f}(\mathbf{x})\big)  = {\kappa}(\mathbf{t}) \mathcal{C}(\mathbf{t}),
\end{aligned}
\end{equation} where $\mathcal{F}^{-1}$ represents the multidimensional inverse Fourier transformation from space of data $\mathbf{x} \in \mathbb{R}^2$ to frequency space coordinates $\mathbf{t} \in \mathbb{R}^2$. ${\kappa} = \mathcal{F}^{-1} \big({K}\big)$ is the inverse Fourier transform of the kernel ${K}$ and $\mathcal{C}$ is the empirical characteristic function (ECF) of the data, defined as
\begin{equation}
\begin{aligned}
 \label{eqn:ecf}
     \mathcal{C}(\mathbf{t}) &= {n}^{-1} \sum_{j = 1}^{n} \exp{(i \mathbf{t}^\prime \mathbf{X}_j)}.
\end{aligned}
\end{equation}
\cite{Bernacchia2011} derive the optimal transform kernel $\hat{\kappa}$ that minimizes the $MISE$ given by Equation \ref{eqn:mise}, given as follows:
\begin{equation}
\begin{aligned}
 \label{eqn:kernel_fouriertransformed}
 \hat{\kappa}(\mathbf{t}) &= \frac{{n}}{2(n-1)} \left[1 + \sqrt{1 - \frac{4({n}-1)}{ \lvert n \mathcal{C}(\mathbf{t}) \rvert^2 }} \right] I_{A_{n}}(\mathbf{t}),
\end{aligned}
\end{equation} where $A_{n}$ serves as a low-pass filter that yields a stable estimator (see Remarks \ref{remark_filter1} and \ref{remark_filter2} in the appendix). We follow the nomenclature of \cite{Bernacchia2011} and denote $A_{n}$ as the set of ``acceptable frequencies''. The optimal transform kernel $\hat{\kappa}$ in Equation \ref{eqn:kernel_fouriertransformed} may be antitransformed back to the real space to obtain the optimal kernel $\hat{K} \in \mathcal{K}$, which yields the optimal density estimator $\hat{f}$ according to Equation \ref{eqn:kde}.
%The classical kernel estimation approach (see \cite{Silverman1986}) assumes a specific form (such as the Gaussian kernel or Epanechnikov kernel) of the kernel $K_H$ which depends on a bandwidth parameter $H$ and requires tuning $H$ according to some optimality criterion. In contrast, the SCE method does not make strong assumptions on the functional form of ${K}$. Rather, its estimate $\hat{K}$ will be determined entirely by a data-driven approach. 
Theorem \ref{thm:strong_conv} presents the sufficient conditions for the estimate $\hat{f}$ to converge to the true density $f$ for ${n} \rightarrow \infty$. %The theorem establishes strong consistency of the estimator $\hat{f}$ to the true density $f$ as ${n} \rightarrow \infty$ at all points on the support of $f$. 

\begin{theorem}
\label{thm:strong_conv}
Let the true density $f$ be square integrable and its corresponding Fourier transform $\phi$ be integrable, then the self consistent estimator $\hat{f}$, which is defined by Equations \ref{eqn:kde} - \ref{eqn:kernel_fouriertransformed} converges almost surely to the true density as $n \rightarrow \infty$, under the additional  assumptions $\mathcal{V}(A_{n}) \rightarrow \infty, \mathcal{V}(A_{n})/\sqrt{{n}} \rightarrow 0$ and $\mathcal{V}(\bar{A}_{n}) \rightarrow 0$ as $n \rightarrow \infty$. Further, assuming $f$ to be continuous on dense support $\chi$, we have uniform almost sure convergence of $\hat{f}$ to $f$ as $n \rightarrow \infty$. 
\end{theorem}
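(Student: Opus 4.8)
The plan is to carry out the entire analysis in frequency space and then transfer back by Fourier inversion, exploiting the product structure $\hat{\phi}=\hat{\kappa}\,\mathcal{C}$ of the estimator. First I would record two elementary facts about the building blocks. Since the summands $e^{i\mathbf{t}'\mathbf{X}_j}$ in the empirical characteristic function \pref{eqn:ecf} are bounded i.i.d.\ complex exponentials, the strong law of large numbers gives $\mathcal{C}(\mathbf{t})\to\phi(\mathbf{t})$ almost surely for each fixed $\mathbf{t}$, while Hoeffding's inequality applied to the real and imaginary parts yields a uniform exponential concentration bound $\Pr\{|\mathcal{C}(\mathbf{t})-\phi(\mathbf{t})|>\varepsilon\}\le 2\exp(-cn\varepsilon^2)$ with constant independent of $\mathbf{t}$; the typical fluctuation is therefore of order $n^{-1/2}$, and $\mathbb{E}|\mathcal{C}(\mathbf{t})-\phi(\mathbf{t})|\le C n^{-1/2}$ uniformly. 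Next I would bound the transform kernel \pref{eqn:kernel_fouriertransformed}: because $|\mathcal{C}(\mathbf{t})|\le 1$ and the square-root argument lies in $[0,1]$ on the acceptable set, $|\hat{\kappa}(\mathbf{t})|\le n/(n-1)\le 2$, and wherever $\phi(\mathbf{t})\neq 0$ one has $|n\mathcal{C}(\mathbf{t})|^2\to\infty$, so $\hat{\kappa}(\mathbf{t})\to 1$ almost surely.

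The heart of the argument is then the decomposition
\begin{equation*}
\hat{\phi}(\mathbf{t})-\phi(\mathbf{t})
=\underbrace{\hat{\kappa}(\mathbf{t})\bigl\{\mathcal{C}(\mathbf{t})-\phi(\mathbf{t})\bigr\}I_{A_n}(\mathbf{t})}_{\text{stochastic term}}
+\underbrace{\bigl\{\hat{\kappa}(\mathbf{t})-1\bigr\}\phi(\mathbf{t})I_{A_n}(\mathbf{t})}_{\text{kernel bias}}
-\underbrace{\phi(\mathbf{t})I_{\bar{A}_n}(\mathbf{t})}_{\text{truncation bias}},
\end{equation*}
which I would integrate in $L^1(\mathbb{R}^2)$. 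For the stochastic term, $|\hat{\kappa}|\le 2$ and the uniform first-moment bound give $\mathbb{E}\int_{A_n}|\hat{\kappa}\,(\mathcal{C}-\phi)|\,d\mathbf{t}\le 2C\,\mathcal{V}(A_n)/\sqrt{n}\to 0$ by the assumption $\mathcal{V}(A_n)/\sqrt{n}\to 0$; combining the exponential concentration with a covering net of the finite-volume set $A_n$ and summing the resulting tail probabilities (Borel--Cantelli) upgrades this to almost-sure convergence. The kernel-bias term is dominated by $2|\phi|\in L^1$ and tends to $0$ pointwise (since $\hat{\kappa}\to 1$ where $\phi\neq 0$, and the factor $\phi$ annihilates the set where $\phi=0$), so dominated convergence sends $\int_{A_n}|\hat{\kappa}-1|\,|\phi|\,d\mathbf{t}\to 0$. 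The truncation bias is $\int_{\bar{A}_n}|\phi|\,d\mathbf{t}$, which vanishes because $\phi$ is integrable and $\mathcal{V}(\bar{A}_n)\to 0$ forces the excluded frequencies to carry asymptotically no mass, while $\mathcal{V}(A_n)\to\infty$ guarantees the acceptable band eventually exhausts the essential frequency support so that the estimator is not asymptotically oversmoothed. Assembling the three pieces yields $\|\hat{\phi}-\phi\|_{L^1}\to 0$ almost surely.

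Finally I would transfer back to real space. The inversion bound $\sup_{\mathbf{x}}|\hat{f}(\mathbf{x})-f(\mathbf{x})|\le(2\pi)^{-2}\|\hat{\phi}-\phi\|_{L^1}$, valid because $\hat{\phi}$ is bounded and supported on the finite-volume set $A_n$ (hence integrable), delivers uniform almost-sure convergence once $f$ is assumed continuous, i.e.\ once the inversion formula for $f$ holds at every $\mathbf{x}$; the pointwise almost-sure statement at points of the support follows a fortiori from the same bound wherever the inversion of $\phi$ is valid. I emphasize that a direct dominated-convergence argument on the inversion integral alone cannot work, since $\hat{\phi}$ has no fixed integrable envelope once $\mathcal{V}(A_n)\to\infty$, which is precisely why the quantitative $L^1$ bound driven by the rate conditions is indispensable.

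I expect the main obstacle to be the stochastic term, and specifically the data-dependence of the acceptable set $A_n$ together with the non-Lipschitz behaviour of $\hat{\kappa}$ near the threshold $|n\mathcal{C}(\mathbf{t})|^2=4(n-1)$, where the square-root argument degenerates. I would handle this by sandwiching the random set $A_n$ between two deterministic frequency regions defined through $|\phi|$ (using the uniform concentration of $|\mathcal{C}|$ toward $|\phi|$ on a high-probability event), so that on that event the integration domain is effectively deterministic; the contribution of the thin near-threshold band is then controlled separately using $|\hat{\kappa}\mathcal{C}|\le 1$ and the smallness of its volume. Reconciling the three rate conditions so that the net argument remains summable for Borel--Cantelli, possibly at the cost of a logarithmic factor in $\mathcal{V}(A_n)/\sqrt{n}$, is the most delicate bookkeeping step.
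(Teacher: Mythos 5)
Your plan follows what is essentially the paper's own route (the proof in Supplement S1.4 builds on the Fourier-space analysis of \cite{Bernacchia2011} and \cite{OBrien2016}): the algebraic decomposition of $\hat{\phi}-\phi$ into a stochastic term, a kernel-bias term $(\hat{\kappa}-1)\phi I_{A_n}$, and a truncation term $\phi I_{\bar{A}_n}$ is exactly right; the dominated-convergence treatment of the kernel bias (with $\hat{\kappa}\to 1$ wherever $\phi\neq 0$, and the factor $\phi$ killing the zero set) is sound; the truncation term is indeed controlled by absolute continuity of $\int|\phi|$ under $\mathcal{V}(\bar{A}_{n})\to 0$; and the inversion bound $\sup_{\mathbf{x}}|\hat{f}(\mathbf{x})-f(\mathbf{x})|\le(2\pi)^{-2}\|\hat{\phi}-\phi\|_{L^1}$ is the correct bridge from frequency space to uniform convergence, including your correct observation that a naive dominated-convergence argument on the inversion integral cannot work.

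The genuine gap sits exactly where the theorem's conclusion lives: the \emph{almost-sure} convergence of the stochastic term. Your bound $\mathbb{E}\int_{A_n}|\hat{\kappa}(\mathcal{C}-\phi)|\,d\mathbf{t}\le 2C\,\mathcal{V}(A_n)/\sqrt{n}\to 0$ gives only convergence in probability via Markov's inequality, and the proposed upgrade (exponential concentration on a covering net plus Borel--Cantelli) does not close under the stated hypotheses, for two concrete reasons. First, the hypotheses carry no rate: $\mathcal{V}(A_n)/\sqrt{n}\to 0$ permits, say, $\mathcal{V}(A_n)^2=n/\log\log n$, in which case the tail bounds of the form $\exp\{-c\varepsilon^{2}n/\mathcal{V}(A_n)^{2}\}$ are not summable and the first Borel--Cantelli lemma cannot be applied; your own concession that a ``logarithmic factor'' may be needed amounts to proving a theorem with strengthened hypotheses, not this one. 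Second, the covering-net step needs an equicontinuity bound on $\mathbf{t}\mapsto\mathcal{C}(\mathbf{t})-\phi(\mathbf{t})$, whose natural Lipschitz constant is $n^{-1}\sum_{j}\|\mathbf{X}_j\|$; taming it requires $\mathbb{E}\|\mathbf{X}\|<\infty$, which does \emph{not} follow from $f\in\mathcal{L}^2$ and $\phi$ integrable (a Cauchy-type density satisfies both yet has no first moment), so a truncation argument or an added moment assumption is unavoidable. Relatedly, the expectation bound itself silently treats $A_n$ as deterministic, whereas $A_n$ is defined through $\mathcal{C}$ and is correlated with the integrand; your sandwich-by-deterministic-sets device is the right repair, but it must be in force \emph{before} the expectation bound is invoked, not after. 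As written, therefore, your argument establishes $\|\hat{\phi}-\phi\|_{L^1}\to 0$ in probability, and the passage to the almost-sure statement --- the actual content of Theorem 1 --- remains open.
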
 Here $\bar{A}_{n}$ is the complement of $A_{n}$ and the volume of $A_n$ is given by $\mathcal{V}(A_{n})$.  
% \begin{proof}
% The detailed proof of Theorem \ref{thm:strong_conv} is given in Section S1.4 of the supplement. 
% \end{proof}
Using Theorem \ref{thm:strong_conv}, Theorem \ref{thm:strong_conv_mi} establishes almost sure convergence of $\widehat{MI}$ to $MI$ as $n \rightarrow \infty$. 
\begin{theorem}
\label{thm:strong_conv_mi}
Let the conditions presented in Theorem \ref{thm:strong_conv} hold. We assume that the true copula density $c$ and marginal densities $f_X$ and $f_Y$ are smooth and bounded away from zero and infinity on their respective support. Under these assumptions, we have 
\begin{equation*}
    \widehat{DMI}(X|Y) - DMI(X|Y)  \overset{\text{a.s.}}{\rightarrow} 0, \text{ as }n \rightarrow \infty.
\end{equation*} 
\end{theorem}
% \begin{proof}
% The detailed proof of Theorem \ref{thm:strong_conv_mi} is given in Section S1.4 of the supplement. 
% \end{proof}

\textbf{\emph{Asymptotic behavior of $\widehat{DMI}$ and $\hat{\Delta}$.}} We establish the asymptotic normality of the resulting estimators $\widehat{DMI}$ and $\hat{\Delta}$ by proposing a new data-splitting inference as described below. Note that the invocation of data-splitting is solely for statistical inference and not required for estimation purposes alone.

We split the data of size $n$ into two disjoint sets, $\mathcal{D}_1$ and $\mathcal{D}_2$ with sample sizes $n_1$ and $n_2$ respectively ($n_1 + n_2 = n$). As described in Section \ref{sec:methodology}, using data from $\mathcal{D}_1$ we obtain SCEs of the copula density function $\hat{c}_{\mathcal{D}_1}$ and the marginal density functions $\hat{f}_{X; \mathcal{D}_1}$ and  $\hat{f}_{Y; \mathcal{D}_1}$. Under some mild assumptions (see Theorem \ref{thm:strong_conv}), we have proved uniform almost sure convergence of these estimators to their population counterparts on their respective support sets as sample size $n_1 \rightarrow \infty$. 
%In Sections \ref{sec:mi} and \ref{sec:marg_ent} we noted that knowledge of estimators $\hat{c}_{\mathcal{D}_1}, \hat{f}_{X; \mathcal{D}_1}, \hat{f}_{Y; \mathcal{D}_1}$ and  
Using data from $\mathcal{D}_2$, we then evaluate the estimators %$\widehat{MI}(X, Y) = n_2^{-1} \sum_{j=1}^{n_2} \log \left\{\hat{c}_{\mathcal{D}_1}(\vect{Z}^{\mathcal{D}_2}_j)\right\}$, $\hat{H}(X) = n_2^{-1} \sum_{j=1}^{n_2} - \log \left\{\hat{f}_{X; \mathcal{D}_1}(X^{\mathcal{D}_2}_j)\right\}$ and $\hat{H}(Y) = n_2^{-1} \sum_{j=1}^{n_2}  - \log \left\{\hat{f}_{Y; \mathcal{D}_1}(Y^{\mathcal{D}_2}_j)\right\}$.
\begin{gather}
  \begin{aligned}
\label{eq:dmi_parts_estim}
    \widehat{MI}(X, Y) &= {n_2}^{-1}\sum_{j=1}^{n_2}  \log \left\{\hat{c}_{\mathcal{D}_1}(\vect{Z}^{\mathcal{D}_2}_j)\right\},\\
    \hat{H}(X) &=  {n_2}^{-1}\sum_{j=1}^{n_2} - \log \left\{\hat{f}_{X; \mathcal{D}_1}(X^{\mathcal{D}_2}_j)\right\},\\
    \hat{H}(Y) &= {n_2}^{-1}\sum_{j=1}^{n_2}  - \log \left\{\hat{f}_{Y; \mathcal{D}_1}(Y^{\mathcal{D}_2}_j)\right\},
\end{aligned}  
\end{gather}
% \begin{equation}
% \label{eq:dmi_parts_estim}
% \begin{bmatrix}
% \widehat{MI(X, Y)} \\
% \hat{H}(X) \\
% \hat{H}(Y)
% \end{bmatrix} = 
% n_2^{-1} \sum_{j=1}^{n_2}
% \begin{bmatrix}
% \log \left\{\hat{c}_{\mathcal{D}_1}(\vect{Z}^{\mathcal{D}_2}_j)\right\}\\
% - \log \left\{\hat{f}_{X; \mathcal{D}_1}(X^{\mathcal{D}_2}_j)\right\} \\
%  - \log \left\{\hat{f}_{Y; \mathcal{D}_1}(Y^{\mathcal{D}_2}_j)\right\}
% \end{bmatrix}.
% \end{equation}
%where the original data points are given by $\left\{(X_j, Y_j)\right\}_{j=1}^n$ and transformed (to copula space $\left[0, 1 \right]^2$) data points are given by $\left[ \vect{Z}_j =  \left\{\hat{F}_{X}(X_j), \hat{F}_{Y}(X_j) \right\}\right]_{j=1}^n$. 
Consequently, we obtain the following estimators $\widehat{DMI}(X|Y) = \widehat{MI}(X, Y)  \times  \widehat{ER}(X|Y)$, $\widehat{DMI}(Y|X) = \widehat{MI}(X, Y) \times \widehat{ER}(Y|X)$ and $\hat{\Delta} = \widehat{DMI}(X|Y) - \widehat{DMI}(Y|X)$.
%We use $\hat{H}(X), \hat{H}(Y)$ and Equation \ref{eq:mi_decomp} in order to estimate $\widehat{ER}(X|Y)$ and $\widehat{ER}(Y|X)$ and finally arrive at 
% \begin{gather}
%   \begin{aligned}
% \label{eq:dmi_estim}
%     \widehat{DMI}(X|Y) &= \widehat{MI}(X, Y)  \times  \widehat{ER}(X|Y),\\
%     \widehat{DMI}(Y|X) &= \widehat{MI}(X, Y) \times \widehat{ER}(Y|X),\\
%     \hat{\Delta} &= \widehat{DMI}(X|Y) - \widehat{DMI}(Y|X).
% \end{aligned}  
% \end{gather}
The following two theorems establish asymptotic normality of $\widehat{DMI}$ and  $\hat{\Delta}$ as $n_1 \wedge n_2 := \min(n_1, n_2) \rightarrow \infty$.

\begin{theorem}
\label{thm:dmi_normality}
Let the conditions presented in Theorem \ref{thm:strong_conv_mi} hold. %We assume that the true copula density $c$ and marginal densities $f_X$ and $f_Y$ are smooth and bounded away from zero and infinity on their respective support. 
Further, we assume $MI \neq 0$. Under these assumptions, we have $ \sqrt{n_2}\left\{\widehat{DMI}(X|Y) - DMI(X|Y) \right\} \overset{\mathcal{D}}{\rightarrow} N \left(0, \sigma_{DMI}^2\right)$ as $n_1 \wedge n_2 \rightarrow \infty$, with $\sigma_{DMI}^2$ denoting the asymptotic variance of the estimator.  The appendix presents a closed-form expression of $\sigma_{DMI}^2$, which may be estimated using standard Monte Carlo tools \citep{robert_2010}.
\end{theorem}
% \begin{proof}
% The detailed proof of Theorem \ref{thm:del_normality} is given in Section S2.2 of the Supplement. The Supplement also presents a closed-form expression of $\sigma_{DMI}^2$, which may be estimated using standard Monte Carlo tools \citep{robert_2010}.
% \end{proof}

\begin{theorem}
\label{thm:del_normality}
Let the conditions presented in Theorem \ref{thm:dmi_normality} hold. Under these assumptions, we have $\sqrt{n_2}\left(\hat{\Delta} - {\Delta} \right) \overset{\mathcal{D}}{\rightarrow} N(0, \sigma_{\Delta}^2)$ as $n_1 \wedge n_2 \rightarrow \infty$, with $\sigma_{\Delta}^2$ denoting the asymptotic variance of the estimator. The appendix presents a closed-form expression of $\sigma_{\Delta}^2$, which may be estimated using standard Monte Carlo tools \citep{robert_2010}.
\end{theorem}
% \begin{proof}
% The proof of Theorem \ref{thm:del_normality} is included in Section S2.5 of the Supplement. The Supplement also presents a closed-form expression of $\sigma_{\Delta}^2$, which may be estimated using standard Monte Carlo tools \citep{robert_2010}.
% \end{proof}

\begin{remark}
Note that the sample size-based scaling factor associated with both $\hat{\Delta}$ and $\widehat{DMI}$ are linked with the sample size $n_2$ of the second data split $\mathcal{D}_2$, rather than the entire combined sample $n = n_1 + n_2$. This is a consequence of the data splitting method we propose. However, we require  $n_1 \wedge n_2 \rightarrow \infty$ while analysing the asymptotic behaviour of $\hat{\Delta}$ and $\widehat{DMI}$. One way to split the data would be to create two datasets of (approximately) similar size, i.e., $(n_1 = n_2) \approx n/2$. 
\end{remark}

\textbf{\emph{Testing for independence using $DMI$.}} The  statistic $\widehat{DMI}(X|Y) = \widehat{MI}(X, Y) \times \widehat{ER}(X|Y)$ is used to test for independence between two variables $X$ and $Y$. Rejection rules of the test based on the asymptotic distributions require data with large sample sizes, which may not be always available in practice. To ensure a stable and reliable performance, we implement a permutation-based test as it can give a precise finite-sample distribution of the test statistic for even small samples. 
% For a random sample of $n$ observations, $S = \left\{(X_j , Y_j )\right\}_{j=1}^n$, let $\{\delta(1), \delta(2),\ldots, \delta(n)\}$ be a random permutation of $\{1, 2,\ldots, n\}$. Based on $\delta$-permuted data set $S_\delta = \left\{(X_j , Y_{\delta(j)}) \right\}_{j=1}^n$, we calculate the corresponding estimate $\widehat{DMI}_{\delta_1}$. On repeating the above procedure $r$ times, $\mathcal{T}_{perm} = \left\{\widehat{DMI}_{\delta_r}\right\}_{r=1}^R$, a collection of estimates, is obtained. The null distribution of $\widehat{DMI}$, under the null
% hypothesis $H_0: X$ and $Y$ are independent, can be approximated by the empirical distribution generated by  $\mathcal{T}_{perm}$. At significance level $\alpha$, we reject the null hypothesis when $\widehat{DMI}$ based on the original data is greater than the $(1-\alpha)$th empirical quantile of $\mathcal{T}_{perm}$.  
With the null hypothesis of independence rejected, we may further test for the directionality of dependence as described in the next section.

\textbf{\emph{Test for asymmetric predictability using $\Delta$.}} The difference $\Delta = {DMI}(X|Y) - {DMI}(Y|X)$ determines a direction of dependence between $X$ and $Y$. %As discussed in Section \ref{subsec:dmi_ind}, we can test if $X$ and $Y$ are dependent or not. 
When $X$ and $Y$ are not independent, with both $DMI(X|Y) > 0$ and $DMI(Y|X) > 0$, the null hypothesis $H_0: \Delta = 0$ signifies a bivariate relationship with ``predictive symmetry''. If $\Delta$ is significantly larger (or smaller) than zero, then we assign a direction of dependence from $Y$ to $X$ (or $X$ to $Y$) since $X$ (or $Y$) exerts ``predictive dominance'' on $Y$ (or $X$). We propose an asymptotic test based on the large sample behaviour of $\hat{\Delta} = \widehat{DMI}(X|Y) - \widehat{DMI}(Y|X)$ to test for the null hypothesis $H_0$ described above. Using the result presented in Theorem \ref{thm:del_normality}, we obtain a $95\%$ asymptotic confidence interval (CI) of $\hat{\Delta}$.  If the CI contains zero, we claim $X$ and $Y$ have predictive symmetry. If the CI lies to the right (left) of zero, we acquire data evidence in favor of  $DMI(X|Y)$ being significantly larger (or smaller) than $DMI(Y|X)$. Consequently, we assert at $5\%$ level of significance that $X$ exerts predictive dominance over $Y$ (or $Y$ exerts predictive dominance over $X$) in their bivariate asymmetric relationship. 

\section{Simulation studies}

\textbf{\emph{I: Testing for independence.}}
%We use synthetic datasets to evaluate the performance of our proposed statistic to test for independence in a bivariate framework.
Let $S = \left\{(X_j, Y_j) \right\}_{j=1}^n$ be a random sample of $n$ observations drawn from a bivariate PDF $f_{XY}$ on $\mathbb{R}^2$. 
%For each of the bivariate associations described below, we simulate the data via the decomposition $f_{X, Y} = f_{Y \lvert X} \times f_X$ where $f_{Y \lvert X}$ is the conditional PDF of $Y$ conditioned on $X$ and $f_X$ is the marginal PDF of $X$. Similar to \cite{Zeng2018}, 
In each pattern described below, the signal parameter $a$ determines the strength of association between $X$ and $Y$ with $a = 0$ denoting independence. Upon increasing $a$, we increase the signal strength of $X$ in $Y$ relative to the independent noise $\epsilon$, implying a departure from independence (i.e. the null case):   %with the dots in black representing the noise-added data used in our simulations, whereas the points coloured red, blue and yellow are the denoised data points, exhibiting the true underlying association pattern. 
\begin{enumerate}
    \item[(P1)] \textbf{Linear:} 
    $X \sim N(\mu = 0, \sigma = 1)$ and $Y = aX + \epsilon$; $\epsilon \sim N(0, 0.5)$.
    \item[(P2)] \textbf{Quadratic:} 
    $X \sim N(0, 1)$ and $Y = aX^2 + \epsilon$; $\epsilon \sim N(0, 0.5)$.
    \item[(P3)] \textbf{Circular:} 
    $\theta \sim U(0, 1), X = 3\cos(2\pi\theta)$ and  $Y = 3a\sin(2\pi\theta) + \epsilon$; $\epsilon \sim N(0, 1)$.
    \item[(P4)] \textbf{Spiral:} 
    $\theta \sim U(0, 4), X = a\theta\cos(\pi\theta) + \epsilon_1$ and  $Y = \theta \sin(\pi\theta) + \epsilon_2$; $\epsilon_1, \epsilon_2 \sim N(0, 0.1)$.  
    \item[(P5)] \textbf{Exponential:} 
    $U \sim U(-3, 3), X = U + \epsilon_1$ and  $Y = a\exp(U) + \epsilon_2$; $\epsilon_1, \epsilon_2 \sim N(0, 0.1)$. 
    \item[(P6)] \textbf{Sinusoidal:}
    $U \sim U(0, \sqrt{12}), X = U + \epsilon_1$ and $Y = a\sin(2\pi U /\sqrt{12}) + \epsilon_2$; $\epsilon_1 , \epsilon_2 \sim N(0, 0.1)$. 
\end{enumerate} 
In each pattern above, the systematic components are independent of the error components. For each scenario described above, we examine the performance of the $DMI$-based permutation test for independence through both type I error rate under the null case, i.e., independence, with $a=0$ and the empirical power curve for different values of $a$ summarized from $r = 1000$ rounds of simulation. We vary sample sizes $n \in \left\{250, 500, 1000\right\}$. In Figure \ref{fig:simulation1_scatter}, we present scatter plots of each of six patterns for specific values of $a \neq 0$, exhibiting various association patterns from linear to highly nonlinear relationships.

The simulation results are presented in Figure \ref{fig:simulation1}. Note that the test returns a size that is approximately at the nominal $\alpha = 0.05$ under the null hypothesis of independence, exhibiting control over Type I error. As shown in Figure~\ref{fig:simulation1}, in each scenario described above, increasing the strength of association in $(X,Y)$ results in increased empirical power of the test. %Simulating data from the linear association pattern P1 and P2, given the same value of $a$, the test detects independence for asymmetric noise with higher power than for symmetric noise. A similar observation is made for quadratic association patterns P3 and P4 as well. 
Moreover, the empirical power increases as we increase sample size in all six cases.

\begin{figure*}
\centerline{\includegraphics[width = 0.75\textwidth, height = 6in]{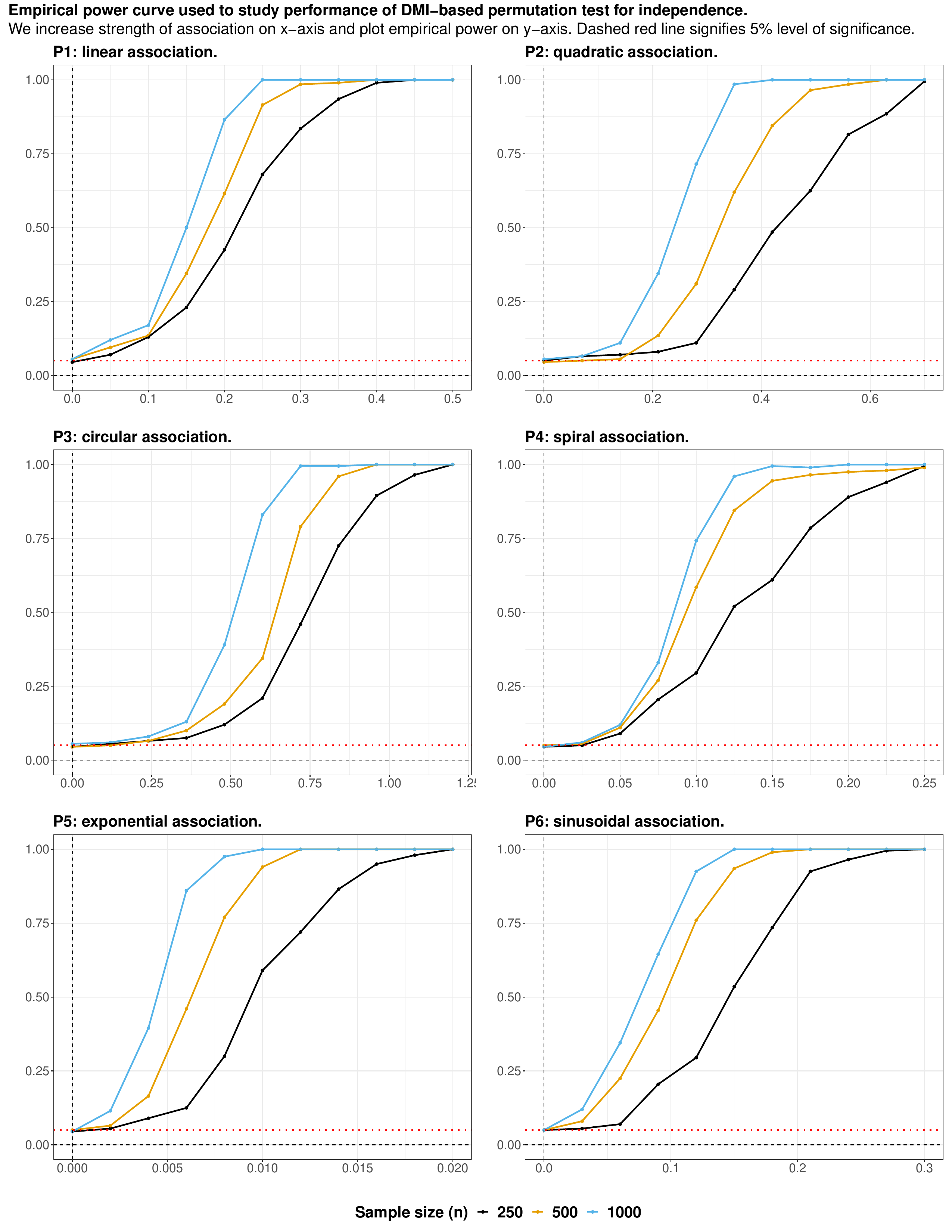}}
\caption{Power curves for simulation study I: testing for independence using $DMI$.}
\label{fig:simulation1}
\end{figure*}

\textbf{\emph{II: Testing for predictive asymmetry.}}
We analyze the test for predictive asymmetry, which is the main focus of this paper. Using the statistic $\hat{\Delta}(X|Y) =  \widehat{DMI}(X|Y) - \widehat{DMI}(Y|X)$, we want to investigate departure from symmetric associations in synthetic bivariate datasets. We generate a sample of $n$ observations drawn from a bivariate PDF $f_{XY}$ on $\mathbb{R}^2$, through representation involving Sklar's theorem  \citep{Sklar1959FonctionsDR} the underlying copula density function and the two associated marginal densities, given as follows:
\begin{enumerate}
    \item \textbf{Choice of copula density:} we choose the bivariate Gaussian copula with $\rho = 0.5$, the Clayton copula with $\theta = 2$ and the Gumbel copula with $\theta = 2$. For more details on various copula classes, please see \cite{czado_2019, joe_2014, nelsen_2006}. 
    \item \textbf{Choice of marginals:} choice of the marginal densities influences predictive asymmetry in the proposed information-theoretic framework. We consider the symmetric Gaussian density function $N(0, \sigma)$, the positively skewed $\text{Exp}(\text{rate} = \lambda)$, and the negatively skewed $\text{Log-Exponential}(\text{scale} = \gamma)$. We vary the parameters $\sigma$, $\lambda$ and $\gamma$ over a range of values. %Note that if $X \sim \text{Exp}(\text{scale} = \gamma)$, then $Y = \log(X)$ follows a  log-exponential distribution with scale parameter $\gamma$.
\end{enumerate}
  Note that by increasing the rate parameter $\lambda$ for the exponential distribution as well as the scale parameter $\gamma$ for the log exponential distribution yield decreased marginal (and conditional) entropy values. In contrast, increasing $\sigma$ results in increased entropy values in the Gaussian marginal choice. 
  For a given association pattern %(with specified copula density parameter as well as marginal density parameters), 
  we generate $R = 1000$ bivariate $i.i.d.$ samples on $(X, Y)$, each of size $n = 500$, in order to draw summary statistics. For each simulated sample, we study the estimate $\left\{\hat{\Delta}_r(X|Y)\right\}_{r=1}^R$ and compute the mean and quantile-based $95\%$ confidence interval of  $\hat{\Delta}_r$.  Based on Theorem \ref{thm:del_normality}, we note that the simulation-based estimates and theoretical values of $95\%$ confidence interval align closely with one another in all simulation cases considered. In Figure \ref{fig:simulation2}, we present a density heatmap for a specific combination of copula and marginal density parameters, in addition to estimated $\hat{\Delta}$ for different marginal settings.

We note that $\hat{\Delta}$ captures departure from association with ``predictive symmetry'' in bivariate datasets: our simulation results vary based on how we specify the marginal (and hence, conditional) density functions and not on the underlying copula family. Subplots (A1), (A2), (A3), and (A4), display density heatmaps of some of the bivariate distributions considered. Subplots (B1), (B2), (B3), and (B4) examine behaviour of $\hat{\Delta}$ upon changing the marginal parameters which influence asymmetric predictability in our framework. In (B1) and (B3), both marginals are either exponential or log-exponential. Note that increasing (or decreasing) the $X$-marginal parameter while keeping the $Y$-marginal fixed causes the conditional entropy of $X$ relative to $Y$ to decrease (or increase), since increasing $\lambda$ for exponential and $\gamma$ for log-exponential distributions causes the associated entropy to decrease. In (B2), for $\lambda = 0.75$ and $\sigma = 0.88$, we report a symmetric bivariate association. Upon increasing (or decreasing) $\sigma$ we note an increase (or decrease) in the marginal entropy of $X$ and hence a positive (or negative) $\hat{\Delta}$, implying predictive dominance of $X$ over $Y$ ($Y$ over $X$), since increasing $\sigma$ for normal distributions causes the associated entropy to increase. Similarly, in (B4), for $\gamma = 0.88$ and $\sigma = 1.25$, we report a balanced bivariate association. Upon increasing (or decreasing) $\sigma$ we note an increase (or decrease) in the conditional entropy of $X$ relative to $Y$ and hence a positive (or negative) $\hat{\Delta}$ implying predictive dominance of $X$ over $Y$ (or $Y$ over $X$).

\begin{figure*}
\centerline{\includegraphics[width = \textwidth, height = 4in]{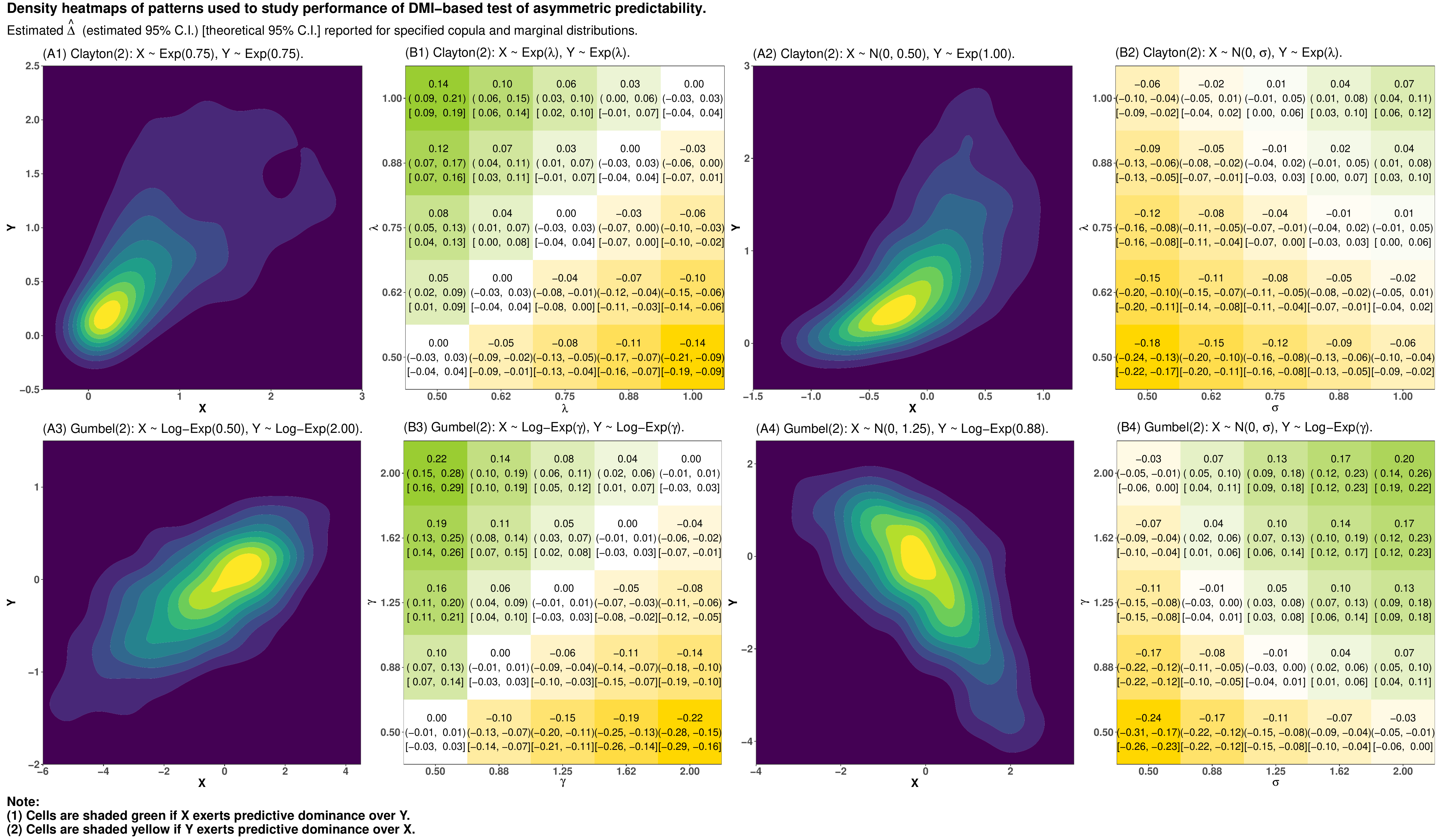}}
\caption{Comparison of theoretical and simulated values of $\hat{\Delta}$ and $95\%$ confidence interval in simulation II.}
\label{fig:simulation2}
\end{figure*}

\section{Data application: DNAm and BP relationship}
%We apply the proposed framework to analyze data from the Early Life Exposures in Mexico to ENvironmental Toxicants (ELEMENT) study. ELEMENT includes three birth cohorts from Mexico City maternity hospitals that have been followed for over two decades to learn how environmental exposures to metals and chemicals affect pregnant women and children. 

Our analysis focuses on a cohort of $525$  children of age 10 - 18 years in the ELEMENT cohort. The aim is to investigate asymmetric association between DNAm  and BP variation.

The NHGRI-EBI GWAS Catalog \citep{Buniello_2018} lists genes in published GWAS for both systolic BP (SBP) and diastolic BP (DBP). In our analysis, we select genes that have been reported to be significantly associated with both SBP and DBP by at least 20 independent studies. Such selection criteria yield three target genes, namely \emph{FGF5}, \emph{ATP2B1} and \emph{PRDM8} for investigation. Each of these genes have a number of CpG sites that are possibly correlated: 21 for \emph{FGF5}, 21 for \emph{ATP2B1} and 51 sites for \emph{PRDM8}. Figure \ref{fig:element} presents an overview of our analysis using $DMI$.

Using our $DMI$-based test, we obtain p - values from testing whether a given CpG site is significantly associated with SBP or DBP. All p - values obtained from the same gene are aggregated at the gene level using the  the Cauchy combination test \citep{Liu_2019}. Our findings reveal that DNAm of gene \emph{FGF5} is associated with SBP (p - value $0.019$), \emph{PRDM8} is associated with SBP ($0.012$) and \emph{ATP2B1} is associated with DBP ($0.042$) in the ELEMENT study at the $5\%$ level of significance, confirming discoveries reported in the GWAS catalog.

Zooming in on  associations at individual CpG methylation sites within each target gene, we apply Bonferroni correction and note that CpG site (i) \emph{CG17564205} within \emph{ATP2B1} is significantly associated with DBP (p - value $2.13 \times 10^{-3}$), (ii) \emph{CG12528713} within \emph{FGF5} is significantly associated with SBP (p - value $1.33 \times 10^{-3}$), and (iii) \emph{CG7462804} within \emph{PRDM8} is significantly associated with SBP (p - value $2.67 \times 10^{-4}$).

For each of the three CpG sites identified above, we further apply the $\Delta$-based test for asymmetry ($\Delta = DMI(\text{DNAm}|\text{BP}) - DMI(\text{BP}|\text{DNAm})$) to examine whether DNAm predictive influence over BP (or converse). We report that two CpG sites (\emph{CG7462804} within \emph{PRDM8} and  \emph{CG12528713} within \emph{FGF5}) do not exhibit predictive asymmetry, with the $\hat{\Delta}\ (95\% \ CI)$ given by $-0.23\ (-4.81, 4.34)$ for \emph{CG7462804} and $-4.76\ (-13.27, 3.74)$ for \emph{CG12528713}. 
%which are associated with SBP exhibit predictive symmetry, with $\hat{\Delta}\ (95\% \ CI)$ for (i) \emph{CG7462804} within \emph{PRDM8} given by $-0.23\ (-4.81, 4.34)$ and (ii) \emph{CG12528713} within \emph{FGF5} given by $-4.76\ (-13.27, 3.74)$. 
Interestingly, DBP is found to exhibit predictive dominance over \emph{CG17564205} within \emph{ATP2B1}, with $\hat{\Delta} = -2.14\ (-3.85, -0.42)$. With most studies on the association between DNAm and blood pressure \citep{Han_2016} still in their infancy, these findings present evidence that DNAm alteration could be influenced by blood pressure and call for further investigation.

\begin{figure*}
\centerline{\includegraphics[width = \textwidth]{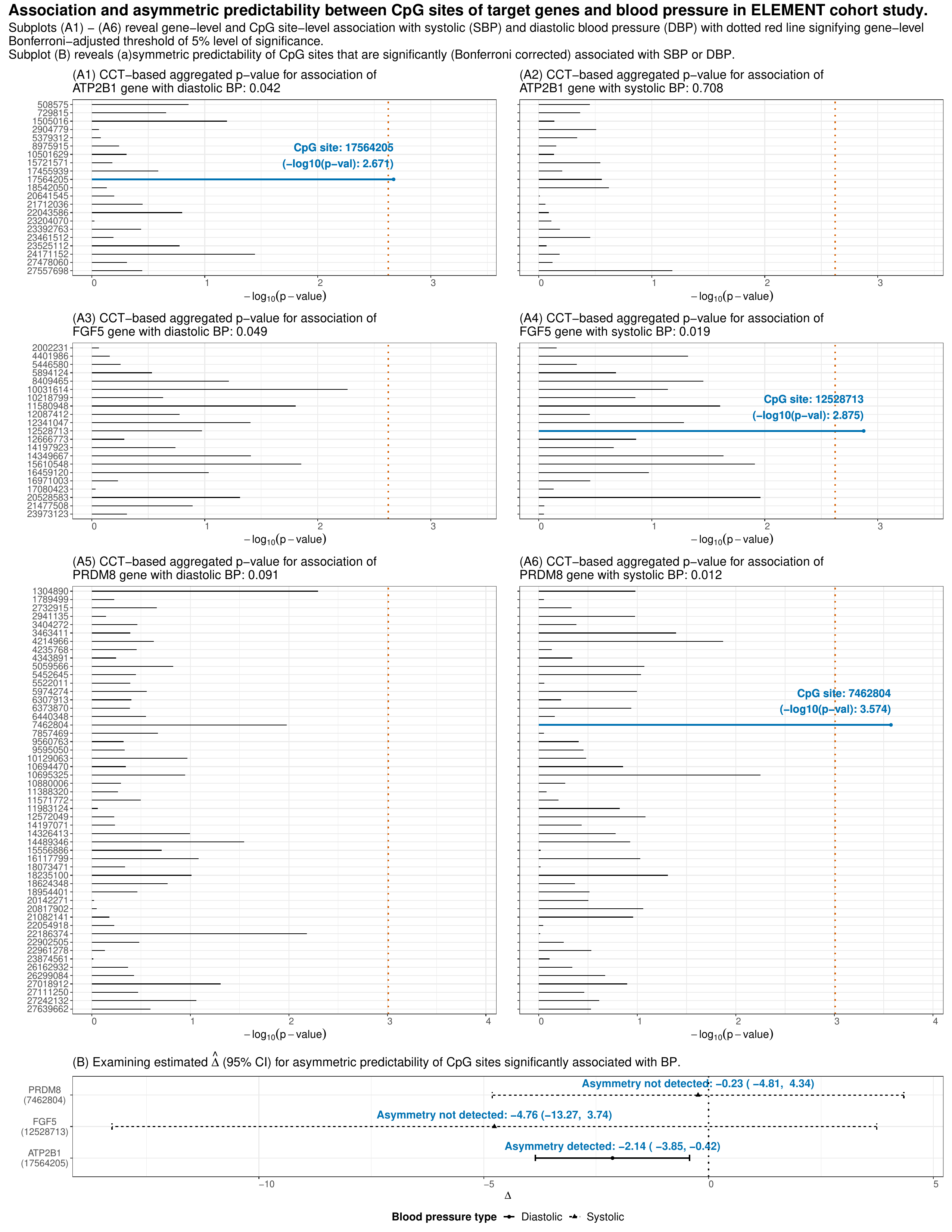}}
\caption{Analysis of ELEMENT \citep{Hernandez_Avila_1996} dataset using $DMI$ and $\Delta$.}
\label{fig:element}
\end{figure*}

\section{Discussion}
Asymmetry is an inherent property of bivariate associations and therefore must not be ignored.  Most dependence measures mask potential asymmetry by implicitly assuming that variables $X$ and $Y$ are equally dependent on each other, which may be false. We present a new causal discovery framework of asymmetric predictability between two random variables $X$ and $Y$ using $DMI$. The $DMI$ is inspired by Shannon's seminal work on information theory and is a well-justified tool that may simultaneously be used to test for association as well as detect and quantify asymmetry, thereby serving as an attractive causal discovery tool. 

A computationally fast and robust Fourier transformation-based method is used to estimate the $DMI$ instead of conventional kernel-based methods. %This approach is  data driven and does not rely on user input for bandwidth selection. 
Moreover, our method consistently performs faster than existing bandwidth-dependent methods - being approximately 4 orders of magnitude faster for bivariate sample sizes of approximately $10^4$. Another contribution of our $DMI$ methodology is data-splitting inference, that enjoys key large-sample properties necessary for valid inference. This new approach enables us to establish asymptotic normality for functionals of PDFs, which is widely regarded as a difficult issue to address.

Our simulations and data analysis clearly demonstrate the necessity and universal applicability of the quantification of asymmetric predictability in bivariate associations, thereby establishing an attractive causal discovery framework. Potential applications of our  $DMI$ framework include mediation analysis and instrumental variable methods, in which implicit assumptions are made about causal directions, often without justification. In absence of \emph{a priori} knowledge, our framework may serve either as a discovery or confirmatory tool, thereby aiding many applications in current statistical research, particularly in the investigation  of causality. 

\newpage

\label{lastpage}
\backmatter
\section*{Appendix}

%  This section is optional.  Here is where you will want to cite
%  grants, people who helped with the paper, etc.  But keep it short!

%  Here, we create the bibliographic entries manually, following the
%  journal style.  If you use this method or use natbib, PLEASE PAY
%  CAREFUL ATTENTION TO THE BIBLIOGRAPHIC STYLE IN A RECENT ISSUE OF
%  THE JOURNAL AND FOLLOW IT!  Failure to follow stylistic conventions
%  just lengthens the time spend copyediting your paper and hence its
%  position in the publication queue should it be accepted.

%  We greatly prefer that you incorporate the references for your
%  article into the body of the article as we have done here 
%  (you can use natbib or not as you choose) than use BiBTeX,
%  so that your article is self-contained in one file.
%  If you do use BiBTeX, please use the .bst file that comes with 
%  the distribution.  In this case, replace the thebibliography
%  environment below by 
%

%  If your paper refers to supporting web material, then you MUST
%  include this section!!  See Instructions for Authors at the journal
%  website http://www.biometrics.tibs.org

\subsection{\textbf{Proof of Theorem \ref{thm:strong_conv}}}
\begin{remark}
\label{remark_filter1}
\emph{
The purpose of the filter $\mathbb{I}_{A_{n}}(\vect{t})$ is to define a Fourier-based low-pass filter on the ECF $\mathcal{C}(\vect{t})$ that yields a stable optimal estimate in the minimum \emph{MISE} sense. Primarily, the set $A_{n}$ is specified such that: 
\begin{equation}
\label{eq:filter}
A_{n} = \left\{\vect{t} \in \mathbb{R}^p: \left|\mathcal{C}(\vect{t}) \right|^2 \geq \mathcal{C}_{\text{min}}^2 = \frac{4({n}-1)}{{n}^2} \right\}.
\end{equation} 
Here, the primary filter is necessary for stability of the estimation method because the lower bound $C_{\text{min}}$ can ensure a well-defined square root term in the expression for $\hat{\phi}$. Moreover, according to \cite{Bernacchia2011}, the set $A_{n}$ may exclude an additional small subset of frequencies to produce a smoother density estimate $\hat{f}$. In order for $\hat{f}$ to converge to the true density $f$ as ${n}$ increases, we require that this set of additionally excluded frequencies must shrink, so that the set $A_{n}$ of included frequencies  grows with increasing ${n}$.}
\end{remark}

\begin{remark}
\label{remark_filter2} \emph{ According to \cite{OBrien2016}, the multidimensional ECF $\mathcal{C}(\vect{t})$ consists of a finite set of contiguous hypervolumes denoted by $\left\{HV_l^{n}\right\}_{l = 1}^{k_{n}},$ where $k_{n}$ is a finite integer. Each hypervolume permits ``above-threshold'' frequency values $\vect{t}$ for which the constraint in Equation \ref{eq:filter} holds. Note that at least one such contiguous hypervolume containing $\vect{t} = \vect{0}$ is guaranteed to exist since $\mathcal{C}(\vect{0}) = 1$ due to normalisation and the primary filter $A_{n}$ has a lower bound $\mathcal{C}_{\text{min}} \leq 1$. Following the suggestion by \cite{OBrien2016} we employ the \emph{lowest contiguous hypervolume} filter, choosing the only hypervolume centered at $\vect{t} = \vect{0}$, which we denote as $HV_1^{n}$ for notational convenience. We make the following observations about $HV_1^{n}$:
\begin{enumerate}
    \item The set of frequencies included in the lowest contiguous hypervolume filter are bounded above since they will always be contained within a finite-sized hypervolume around the origin.
    \item The volume of the lowest contiguous hypervolume filter  grows as the number of data points ${n}$ increases, implying more frequencies are included for larger sample sizes.
\end{enumerate}
The resulting filter satisfies the convergence conditions described by \cite{Bernacchia2011}. Hence, we set $A_{n} = HV_1^{n}$, and study convergence of $\hat{f}$ to the true $f$ as ${n}$ increases. For notational convenience, let $\bar{A}_{n}$ denote the complement set of $A_{n}$ and $\mathcal{V}(A_{n})$ denote the volume of $A_{n}$.
}
\end{remark}

\begin{proof}
Note the frequency filter $A_{n}$, its complement $\bar{A}_{n}$ and its volume $\mathcal{V}(A_{n})$ are described in Remarks \ref{remark_filter1} and \ref{remark_filter2}. Since the true density $f$ and the estimator $\hat{f}$ are both square-integrable, we can express them in terms of their corresponding Fourier transforms $\phi$ and $\hat{\phi}$ respectively. Since the characteristic function is integrable, we have, 
\begin{equation*}
    \int \left| \phi(\vect{t}) \right| d\vect{t} < \infty.
\end{equation*} Through the following sequence of inequalities, we are able to establish an upper bound for the absolute error $\left| \hat{f}(\vect{x})-{f}(\vect{x}) \right|$ 
for any $x \in \mathcal{X}$. By definition, note that $\hat{\phi}(\vect{t}) = 0$ for $\vect{t} \notin A_{n}$. To establish  Theorem \ref{thm:strong_conv}, it is sufficient to show that the upper bound of the absolute error given below tends to zero as ${n} \rightarrow \infty$. We have:
\begin{align}
    \label{eq:pushright2}
        &\left| \hat{f}(\vect{x})-{f}(\vect{x}) \right| \notag\\
        &= \left|\frac{1}{(2 \pi)^p} \int_{\mathbb{R}^p} \exp (-\mathrm{i} \vect{t}^\prime \vect{x})\left\{\hat{\phi}(\vect{t})-\phi(\vect{t})\right\} \mathrm{d} \vect{t}\right| \notag\\
        &\leq \frac{1}{(2 \pi)^p} \int_{\mathbb{R}^p} \lvert \exp (-\mathrm{i} \vect{t}^\prime \vect{x}) \rvert \lvert  \hat{\phi}(\vect{t})-\phi(\vect{t}) \rvert \mathrm{d} \vect{t} \notag\\
        &= \frac{1}{(2 \pi)^p} \int_{\mathbb{R}^p} \lvert  \hat{\phi}(\vect{t})-\phi(\vect{t}) \rvert \mathrm{d} \vect{t} \notag\\
        &= \frac{1}{(2 \pi)^p} \int_{A_{n}}  \lvert  \hat{\phi}(\vect{t})-\phi(\vect{t}) \rvert \mathrm{d} \vect{t} + 
         \frac{1}{(2 \pi)^p} \int_{\bar{A}_{n}}  \lvert  \phi(\vect{t}) \rvert \mathrm{d} \vect{t} \notag\\
         &\leq \frac{1}{(2 \pi)^p} \int_{A_{n}}  \lvert  \hat{\phi}(\vect{t})- \mathcal{C}(\vect{t}) \rvert \mathrm{d} \vect{t} + 
         \frac{1}{(2 \pi)^p} \int_{A_{n}}  \lvert \mathcal{C}(\vect{t}) -  \phi(\vect{t}) \rvert \mathrm{d} \vect{t}  \notag\\
         &  \quad \quad + \frac{1}{(2 \pi)^p} \int_{\bar{A}_{n}}  \lvert  \phi(\vect{t}) \rvert \mathrm{d} \vect{t} \notag\\
         &:= D_1 + D_2 + D_3.
\end{align} Under the assumptions, $\lim_{{n} \rightarrow \infty} \ \mathcal{V}\left(A_{n}\right) =  \infty$ and $\lim_{{n} \rightarrow \infty} \ \mathcal{V}\left(\bar{A}_{n}\right) =  0$. Consequently, the second term in Equation \ref{eq:pushright2}, $D_2 \rightarrow 0$ as ${n} \rightarrow \infty$ due to Theorem 1 of \cite{Csorgo1983}. Further, $D_3 \leq \mathcal{V}\left(\bar{A}_{n}\right)/(2\pi)^p$, since $\left|\phi(\vect{t}) \right| \leq 1$. Consequently,  $D_3 \rightarrow 0$ as ${n} \rightarrow \infty$.

\noindent To prove $D_1 \rightarrow 0$ as ${n} \rightarrow \infty$, we first consider the two following disjoint sets, 
\begin{equation*}
    \begin{aligned}
        B^{+}_{n} = \{\vect{t}: \lvert \mathcal{C}(\vect{t}) \rvert^2 \geq 4({n}-1)/{n}^2\}, \\
        B^{-}_{n} = \{\vect{t}: \lvert \mathcal{C}(\vect{t}) \rvert^2 < 4({n}-1)/{n}^2\}. \\
    \end{aligned}
\end{equation*} We rewrite the first integral $D_1$ as follows
\footnotesize{\begin{equation*}
    \begin{aligned}
    D_1  &= \frac{1}{2 \pi} \int_{A_{n} \cap B^{+}_{n}} \lvert \mathcal{C}(\vect{t}) \rvert 
        \left(1-\frac{{n}}{2({n}-1)}\left[1+\sqrt{1-\frac{4({n}-1)}{ \lvert{n} \mathcal{C}(\vect{t})\rvert^{2}}}\right]  \right) d \vect{t} \notag\\ & \quad + \frac{1}{2 \pi} \int_{A_{n} \cap B^{-}_{n}} \lvert \mathcal{C}(\vect{t}) \rvert d \vect{t} \\ 
        &:= D_4 + D_5.
        \end{aligned}
\end{equation*}}
\normalsize The first term $D_4$ may be simplified by noting that for $\vect{t} \in B^{+}_{n}$, we have $\lvert \mathcal{C}(\vect{t}) \rvert^2 \geq 4({n}-1)/{n}^2$. This ensures a non-negative argument under the square root operation. Using the inequality $\sqrt{1-x} + \sqrt{x} \geq 1$ for $0 \leq x \leq 1$ for $D_4$, and using the inequality $$\lvert \mathcal{C}(\vect{t}) \rvert \leq \sqrt{4({n}-1)/{n}^2} \text{ for } t \in B_{n}^{-},$$ we establish that $D_1$ is bounded as follows:
\begin{align}
\label{eq:pushright3}
       D_1 &= D_4 + D_5 \notag\\
       & \leq \frac{1}{(2 \pi)^p} \int_{A_{n} \cap B^{+}_{n}}  \left\{ \frac{1}{\sqrt{{n}-1}} - \frac{\lvert \mathcal{C}(\vect{t}) \rvert}{{n}-1} \right\}   d\vect{t}   \notag\\
        & \quad + 
        \frac{\sqrt{4({n}-1)}}{{n}(2\pi)^p } \int_{A_{n} \cap B^{-}_{n}} d \vect{t} \notag\\
        &\leq  \frac{1}{(2 \pi)^p}  \left\{ \frac{1}{\sqrt{{n}-1}} + \frac{1}{{n}-1} \right\} \int_{A_{n} \cap B^{+}_{n}} d\vect{t}   \notag\\
        & \quad + 
        \frac{\sqrt{4({n}-1)}}{{n} (2\pi)^p} \int_{A_{n}} d \vect{t} \notag\\
        &= \frac{1}{(2 \pi)^p}  \left\{ \frac{1}{\sqrt{{n}-1}} + \frac{1}{{n}-1} \right\} \mathcal{V}(A_{n} \cap B_{n}^+) \notag\\
        & \quad +  \frac{\sqrt{4({n}-1)}}{{n} (2\pi)^p} \mathcal{V}(A_{n} \cap B_{n}^-) \notag\\
        & \leq \frac{1}{(2 \pi)^p} \left\{ \frac{1}{\sqrt{{n}-1}} + \frac{1}{{n}-1} + \frac{\sqrt{4({n}-1)}}{{n}} \right\} \mathcal{V}(A_{n}) \notag\\
        & \leq \frac{1}{(2 \pi)^p} \left\{ \frac{1}{\sqrt{{n}-1}} + \frac{1}{{n}-1} + \frac{2}{\sqrt{{n}-1}} \right\}  \mathcal{V}(A_{n}).
        \end{align} 
The assumptions in Theorem \ref{thm:strong_conv} include $\mathcal{V}(A_{n})/\sqrt{{n}} \rightarrow 0$ as $n_1 \rightarrow \infty$, which ensures that the upper bound in  Equation \ref{eq:pushright3} tends to zero for large ${n}$. In summary, assuming  $\mathcal{V}(A_{n}) \rightarrow \infty$, $\mathcal{V}(\bar{A}_{n}) \rightarrow 0$, and $\mathcal{V}(A_{n})/\sqrt{{n}} \rightarrow 0$ as $n_1 \rightarrow \infty$, we have $\left| \hat{f}(\vect{x}) -  {f}(\vect{x})\right| \rightarrow 0$ for every $\vect{x} \in \mathcal{X}$.
\end{proof}

\subsection{\textbf{Proof of Theorem \ref{thm:strong_conv_mi}}}
\begin{proof}
From Theorem \ref{thm:strong_conv}, for any small $\epsilon$, there exists sufficiently large $n$ such that $$\left|\hat{c}(\vect{z}) - c(\vect{z}) \right| < \epsilon \text{ for all }\vect{z} \in [0, 1]^2.$$ We define $\delta_j = \hat{c}(\vect{Z}_j) - {c}(\vect{Z}_j)$ and consider the Taylor expansion for $\log\left(1 + \delta_j/c(\vect{Z}_j) \right)$ as follows
\begin{equation*}
\log(\hat{c}(\vect{Z}_j)) = \log({c}(\vect{Z}_j)) + \frac{\delta_j}{{c}(\vect{Z}_j)} + o(\epsilon).
\end{equation*}
Note that $\left|\hat{c}(\vect{Z}_j) - c(\vect{Z}_j) \right| < \epsilon$ and the term $o(\epsilon)$ may be ignored. Further, note that $c$ is bounded below by a constant $B^{-1}$ on the support of $c$, i.e., 
\begin{equation*}
    {c}\left(\vect{z}\right) > B^{-1}, \quad \vect{z} \in D_c.
\end{equation*} Consequently, we have 
\begin{align*}
    \left|\frac{1}{n} \sum_{j=1}^{n} \log \hat{c}\left(\vect{Z}_j\right)-\frac{1}{n} \sum_{j=1}^{n} \log {c}\left(\vect{Z}_j\right)\right| &\leq \frac{1}{n} \sum_{j=1}^{n}\left|\frac{\delta_{j}}{{c}\left(\vect{Z}_j\right)}\right| \notag\\
    & \leq \frac{ \epsilon}{n} \sum_{j=1}^{n} \frac{1}{\left|{c}\left(\vect{Z}_j\right)\right|} \notag\\
    & \leq \frac{\epsilon}{B}, 
\end{align*} which implies 
\begin{align*}
    \left|\frac{1}{n} \sum_{j=1}^{n} \log \hat{c}\left(\vect{Z}_j\right)-\frac{1}{n} \sum_{j=1}^{n} \log {c}\left(\vect{Z}_j\right)\right| \overset{a.s.}{\rightarrow} 0, \text{ as } n \rightarrow \infty, \text{ on } D_c. 
\end{align*} Thus, as $n \rightarrow \infty$, we obtain $\widehat{MI} - MI_0 \overset{a.s.}{\rightarrow} 0$ on $D_c$. Further, by the strong law of large numbers, as $n \rightarrow \infty$, we have $MI_0 -  MI\overset{a.s.}{\rightarrow} 0$  on $\mathbb{R}^p$; consequently as $n \rightarrow \infty$, we have $\widehat{MI} - MI  \overset{a.s.}{\rightarrow} 0$ on $D_c$. By noting that a similar result holds for $ER$, and using the continuous mapping theorem,  we are able to show $\widehat{DMI} - DMI  \overset{a.s.}{\rightarrow} 0$ . 
This concludes the proof.
\end{proof}

\subsection{\textbf{Proof of Theorems \ref{thm:dmi_normality} and \ref{thm:del_normality}}}

In this section we analyze the large-sample behaviour of $\widehat{DMI}$ and $\hat{\Delta}$ using a data-splitting technique. This technique splits the available data $\mathcal{D}$, of size $n$, into two disjoint sets, resulting in  $\mathcal{D}_1$ and $\mathcal{D}_2$ with sample sizes $n_1$ and $n_2$ respectively ($n_1 + n_2 = n$). We require that both $n_1, n_2 \rightarrow \infty$ as $n \rightarrow \infty$, or $n_1 \wedge n_2 := \min(n_1, n_2) \rightarrow \infty$. Data from $\mathcal{D}_1$ is used to obtain estimates of the copula density function $\hat{c}_{\mathcal{D}_1}$ and the marginal density functions $\hat{f}_{X; \mathcal{D}_1}$ and  $\hat{f}_{Y; \mathcal{D}_1}$. Under some mild assumptions, these density estimators converge uniformly to their population counterparts on their respective support sets as sample size $n_1 \rightarrow \infty$. See Theorem \ref{thm:strong_conv} for more details. 

Using data from $\mathcal{D}_2$ we evaluate the estimators $\widehat{MI}, \hat{H}(X)$ and $\hat{H}(Y)$ as follows:
\begin{gather}
   \begin{aligned}
\label{eq:dmi_parts_estim_2}
    \widehat{MI} &= {(n_2)}^{-1}\sum_{j=1}^{n_2}  \log \left\{\hat{c}_{\mathcal{D}_1}(\vect{Z}^{D_2}_j)\right\},\\
    \hat{H}(X) &= {(n_2)}^{-1}\sum_{j=1}^{n_2}  - \log \left\{\hat{f}_{X; \mathcal{D}_1}(X^{D_2}_j)\right\},\\
    \hat{H}(Y) &= {(n_2)}^{-1}\sum_{j=1}^{n_2}  - \log \left\{\hat{f}_{Y; \mathcal{D}_1}(Y^{D_2}_j)\right\}, 
\end{aligned}  
\end{gather} which are used to estimate $\widehat{ER}(X|Y)$ and $\widehat{ER}(Y|X)$. This yields $\widehat{DMI}(X|Y) = \widehat{MI}(X, Y)  \times  \widehat{ER}(X|Y)$,  $\widehat{DMI}(Y|X) = \widehat{MI}(X, Y)  \times  \widehat{ER}(Y|X)$ and $\hat{\Delta} = \widehat{DMI}(X|Y) - \widehat{DMI}(Y|X)$. Likewise, we define the oracle estimators based on the true density functions as follows:
\begin{gather}
   \begin{aligned}
\label{eq:dmi_parts_estim_oracle}
    {MI}_0 &= {(n_2)}^{-1}\sum_{j=1}^{n_2}  \log \left\{{c}(\vect{Z}^{D_2}_j)\right\},\\
    {H}_0(X) &= {(n_2)}^{-1}\sum_{j=1}^{n_2}  - \log \left\{{f}_{X}(X^{D_2}_j)\right\},\\
    {H}_0(Y) &= {(n_2)}^{-1}\sum_{j=1}^{n_2}  - \log \left\{{f}_{Y}(Y^{D_2}_j)\right\}.
\end{aligned}  
\end{gather}

\begin{lemma}
\label{thm:slutsky1}
Assuming the conditions presented in Theorem \ref{thm:dmi_normality} hold, we have
\begin{equation}
\label{eq:thm2_eq1}
    \sqrt{n_2}
\begin{pmatrix}
\widehat{MI} - MI_0 \\
\hat{H}(X) - H_0(X) \\
\hat{H}(Y) - H_0(Y)
\end{pmatrix}\overset{\mathcal{P}}{\rightarrow}  \vect{0}, \text{ as } n_1 \wedge n_2 \rightarrow \infty.
\end{equation}
\end{lemma}

\begin{proof}
Note that 
\begin{align*}
   \sqrt{n_2}\left(\widehat{MI} - MI_0 \right) &=  \frac{1}{\sqrt{n_2}} \sum_{j = 1}^{n_2} \frac{\hat{c}_{\mathcal{D}_1}(\vect{Z}^{D_2}_j) - {c}(\vect{Z}^{D_2}_j)}{{c}(\vect{Z}^{D_2}_j)} \notag\\ \quad & \quad + o_{\mathcal{P}}(1/\sqrt{n_2}), \\
   &= \frac{S_{n_2}}{\sqrt{n_2}} + o_{\mathcal{P}}(1/\sqrt{n_2}).
\end{align*} We now show that the leading term $S_{n_2}/\sqrt{n_2} \overset{\mathcal{P}}{\rightarrow} 0$ as $n_1 \wedge n_2 \rightarrow 0$, which will establish $\sqrt{n_2}\left(\widehat{MI} - MI_0 \right)  \overset{\mathcal{P}}{\rightarrow} 0$ as $n_1 \wedge n_2 \rightarrow 0$. Using similar arguments we may establish $\sqrt{n_2}\left(\hat{H}(X) - H_0(X) \right)  \overset{\mathcal{P}}{\rightarrow} 0$ as well as $\sqrt{n_2}\left(\hat{H}(Y) - H_0(Y) \right)  \overset{\mathcal{P}}{\rightarrow} 0$ as  $n_1 \wedge n_2 \rightarrow 0$. In conjunction with asymptotic normality of the oracle estimators $MI_0, H_0(X)$ and $H_0(Y)$ and Equation \ref{eq:thm2_eq1}, we prove Theorems \ref{thm:dmi_normality} and \ref{thm:del_normality}. 

It is sufficient to show $\mathbb{E}\left\{ \left(S_{n_2}/\sqrt{n_2} \right)^2\right\} \rightarrow 0$ as $n_1, n_2 \rightarrow \infty$. %which will yield $S_{n_2}/\sqrt{n_2} \overset{\mathcal{P}}{\rightarrow} 0$. 
Note that  $\mathbb{E}\left\{ \left(S_{n_2}/\sqrt{n_2} \right)^2\right\} = \mathbb{E}^2 \left(S_{n_2}/\sqrt{n_2} \right) + \mathbb{V} \left(S_{n_2}/\sqrt{n_2} \right)$.  First, we prove $\mathbb{E}\left( S_{n_2}/\sqrt{n_2} \right) = 0$. This is because
\footnotesize{\begin{align*}
& \mathbb{E}\left(\frac{S_{n_2}}{\sqrt{{n_2}}} \right) = \mathbb{E}_{\mathcal{D}_1} \left\{ \mathbb{E}_{\vect{Z} \left. \right| \mathcal{D}_1}\left( \left. \frac{S_{n_2}}{\sqrt{{n_2}}}   \right| \mathcal{D}_1 \right) \right\} \\
&= \mathbb{E}_{\mathcal{D}_1} \left\{ \mathbb{E}_{\vect{Z} \left. \right| \mathcal{D}_1}\left( \frac{1}{\sqrt{{n_2}}} \left.\sum_{j = 1}^{n_2} \frac{\hat{c}_{\mathcal{D}_1}(\vect{Z}^{D_2}_j) - {c}(\vect{Z}^{D_2}_j)}{{c}(\vect{Z}^{D_2}_j)} \right| \mathcal{D}_1 \right) \right\} \\
&= \mathbb{E}_{\mathcal{D}_1} \left\{  {\sqrt{{n_2}}}\  \mathbb{E}_{\vect{Z} \left. \right| \mathcal{D}_1}\left(  \left. \frac{\hat{c}_{\mathcal{D}_1}(\vect{Z}) - {c}(\vect{Z})}{{c}(\vect{Z})} \right| \mathcal{D}_1\right) \right\}, 
\end{align*}} \normalsize where the inner expectation term is evaluated as follows
\begin{align*}
 & \mathbb{E}_{\vect{Z} \left. \right| \mathcal{D}_1}\left(  \left. \frac{\hat{c}_{\mathcal{D}_1}(\vect{Z}) - {c}(\vect{Z})}{{c}(\vect{Z})} \right| \mathcal{D}_1\right) \\
 &= \int_{\vect{z}} \left( \frac{\hat{c}_{\mathcal{D}_1}(\vect{z}) - {c}(\vect{z})}{{c}(\vect{z})} \right) {c}(\vect{z}) d\vect{z} \\
&= \int_{\vect{z}} \hat{c}_{\mathcal{D}_1}(\vect{z}) d\vect{z}  - \int_{\vect{z}} {c}(\vect{z}) d\vect{z} \\
&= \hat{\phi}_{\mathcal{D}_1}(\vect{0}) - 1 = 0.
\end{align*} The last equality holds since $\hat{\phi}_{\mathcal{D}_1}$ is the Fourier transform associated with the optimal density function estimator $\hat{c}_{\mathcal{D}_1}$, and we know $\hat{\phi}_{\mathcal{D}_1}(\vect{0}) = 1$ and consequently, $\mathbb{E}\left( S_{n_2}/\sqrt{{n_2}} \right) = 0$. Next, we consider the term  $\mathbb{V}\left(S_{n_2}/\sqrt{{n_2}} \right)$:
\begin{align}
\label{eq:thm2_eq2}
\mathbb{V}\left(\frac{S_{{n_2}}}{\sqrt{{n_2}}}\right) &= \mathbb{E}_{\mathcal{D}_1} \left\{ \mathbb{V}_{\vect{Z} \left. \right| \mathcal{D}_1}\left( \left. \frac{S_{n_2}}{\sqrt{{n_2}}}   \right| \mathcal{D}_1 \right) \right\} \notag\\ & \quad + \mathbb{V}_{\mathcal{D}_1} \left\{ \mathbb{E}_{\vect{Z} \left. \right| \mathcal{D}_1}\left( \left. \frac{S_{n_2}}{\sqrt{{n_2}}}   \right| \mathcal{D}_1 \right) \right\},
\end{align} where the second term is already shown to be zero. Note that, conditional on $\mathcal{D}_1$, the terms $\hat{c}(\vect{Z}^{\mathcal{D}_2}_j)$ are independent and identically distributed for all $\vect{Z}^{\mathcal{D}_2}_j \in \mathcal{D}_2$. We have
\begin{align*}
 & \mathbb{V}_{\vect{Z} \left. \right| \mathcal{D}_1}\left( \left. \frac{S_{n_2}}{\sqrt{{n_2}}}   \right| \mathcal{D}_1 \right) \\
 & = \mathbb{V}_{\vect{Z} \left. \right| \mathcal{D}_1}\left (\frac{1}{\sqrt{{n_2}}} \left.\sum_{j = 1}^{n_2} \frac{\hat{c}_{\mathcal{D}_1}(\vect{Z}^{D_2}_j) - {c}(\vect{Z}^{D_2}_j)}{{c}(\vect{Z}^{D_2}_j)} \right| \mathcal{D}_1 \right) \\
 &=  \mathbb{V}_{\vect{Z} \left. \right| \mathcal{D}_1 }\left(  \left. \frac{\hat{c}_{\mathcal{D}_1}(\vect{Z}) - {c}(\vect{Z})}{{c}(\vect{Z})} \right| \mathcal{D}_1\right)\\
 &= \mathbb{E}_{\vect{Z} \left. \right| \mathcal{D}_1} \left\{ \left(  \left. \frac{\hat{c}_{\mathcal{D}_1}(\vect{Z}) - {c}(\vect{Z})}{{c}(\vect{Z})}  \right)^2 \right| \mathcal{D}_1\right\} , 
\end{align*} since $ \mathbb{E}_{\vect{Z} \left. \right| \mathcal{D}_1}\left(  \left. \frac{\hat{c}_{\mathcal{D}_1}(\vect{Z}) - {c}(\vect{Z})}{{c}(\vect{Z})} \right| \mathcal{D}_1\right) = 0$. Moreover, 
\begin{align*}
& \mathbb{E}_{\vect{Z} \left. \right| \mathcal{D}_1} \left\{ \left(  \left. \frac{\hat{c}_{\mathcal{D}_1}(\vect{Z}) - {c}(\vect{Z})}{{c}(\vect{Z})}  \right)^2 \right| \mathcal{D}_1\right\} \\
& = \int_{\vect{z}} \left(  \frac{\hat{c}_{\mathcal{D}_1}(\vect{z}) - {c}(\vect{z})}{{c}(\vect{z})}\right)^2 {c}(\vect{z}) d\vect{z} \\
&\leq B \int_{\vect{z}} \left(\hat{c}_{\mathcal{D}_1}(\vect{z}) - {c}(\vect{z}) \right)^2 d\vect{z}. %, \\
%&\leq B \sup_{\vect{z}} \left(\hat{c}_{\mathcal{D}_1}(\vect{z}) - {c}(\vect{z}) \right)^2 \leq B\epsilon^2
\end{align*} where $B$ is  a (positive) lower bound for the density $c$ over its support. Plugging this inequality into Equation \ref{eq:thm2_eq2}, we get 
\begin{align*}
\mathbb{V}\left(\frac{S_{{n_2}}}{\sqrt{{n_2}}}\right) &\leq B \times  \mathbb{E}_{\mathcal{D}_1} \left\{  \int_{\vect{z}} \left(\hat{c}_{\mathcal{D}_1}(\vect{z}) - {c}(\vect{z}) \right)^2 d\vect{z} \right\},\\
&= B \times MISE(\hat{c}, c).
\end{align*} \cite{Bernacchia2011} present an expression of $MISE$ in terms of the optimal kernel and  prove that the last expression goes to zero as sample size increases, i.e., $ MISE(\hat{c}, c) \rightarrow 0$ as $n_1 \wedge n_2 \rightarrow \infty$. This allows us to claim $\sqrt{n_2} \left(\widehat{MI} - MI_0 \right) \overset{\mathcal{P}}{\rightarrow} 0$  as $n_1 \wedge n_2 \rightarrow \infty$. 
Note that the arguments presented above are generally valid for any true density function that is bounded away from zero and infinity on its support. Hence, they can also be used to establish similar results involving $\hat{H}(X)$ and $\hat{H}(Y)$ as well, thereby concluding the proof.
\end{proof}
\begin{lemma}
\label{thm:slutsky2}
By the multivariate central limit theorem we have for $MI \neq 0$, the oracle estimators jointly converge in distribution to a three-dimensional normal distribution, namely 
\begin{equation}
\label{eq:thm2_eq1}
    \sqrt{n_2}
\begin{pmatrix}
MI_0 - MI \\
H_0(X) - H(X) \\
H_0(Y) - H(Y)
\end{pmatrix}\overset{\mathcal{D}}{\rightarrow}  N\left(\vect{0}, \Sigma\right), \text{ as } n_1 \wedge n_2 \rightarrow \infty,
\end{equation} where $\Sigma$ is a $3 \times 3$ diagonal matrix. 
\end{lemma}
\begin{remark}
\label{rem:avar}
Note how $MI_0$ involves the log-copula density $c$ alone, whereas $H_0(X)$ and $H_0(Y)$ involve the log-marginal densities $f_X$ and $f_Y$ respectively. Clearly, $MI_0$ explains shared or joint behavior while $H_0(X)$ and $H_0(Y)$ explain marginal behaviour. These three quantities have no shared population attributes, implying that $\Sigma $ is a diagonal matrix $\text{diag}\left(\sigma_1^2, \sigma_2^2, \sigma_3^2 \right)$, where 
\begin{align*}
        \sigma_1^2 &= \mathbb{V}\left[\log \left\{{c}(\vect{Z})\right\} \right],\\
        \sigma_2^2 &= \mathbb{V}\left[\log \left\{{f}_X(X)\right\} \right],\\
        \sigma_3^2 &= \mathbb{V}\left[\log \left\{{f}_Y(Y)\right\} \right].
\end{align*} These variance terms may be estimated using standard Monte-Carlo methods. 
\end{remark} 

Lemma \ref{thm:slutsky1} and \ref{thm:slutsky2}, in conjuction with Slutsky's theorem \citep{billingsley} allows us to claim 
\begin{gather}
\begin{aligned}
\label{eq:thm3_eq1}
& \sqrt{n_2}
\begin{pmatrix}
\widehat{MI} - MI \\
\hat{H}(X) - H(X) \\
\hat{H}(Y) - H(Y)
\end{pmatrix} \\
& =   
\sqrt{n_2}
\begin{pmatrix}
MI_0 - MI \\
H_0(X) - H(X) \\
H_0(Y) - H(Y)
\end{pmatrix}  +
\sqrt{n_2}
\begin{pmatrix}
\widehat{MI} - MI_0 \\
\hat{H}(X) - H_0(X) \\
\hat{H}(Y) - H_0(Y)
\end{pmatrix} \\
& \overset{\mathcal{D}}{\rightarrow}  N\left(\vect{0}, \Sigma\right) + o_{\mathcal{P}}(1), 
\end{aligned}
\end{gather} where $\Sigma$ is the asymptotic dispersion matrix as described in Remark \ref{rem:avar}.  

We are now in a position to prove Theorems \ref{thm:dmi_normality} and \ref{thm:del_normality}.
The proofs are very closely related and we only present the proof for Theorem \ref{thm:del_normality}. 
\begin{proof}
Considering the smooth function $$g(a, b, c) = a \times \frac{\exp(b) - \exp(c)}{\exp(b) + \exp(c)},$$ we write $\hat{\Delta} = g(\widehat{MI}, \hat{H}(X), \hat{H}(Y))$ and ${\Delta} = g({MI}, {H}(X), {H}(Y))$. Using the following decomposition
\begin{equation*}
\begin{aligned}
H(X, Y) &= H(X|Y) + H(Y)\\
&= H(Y|X) + H(X),
\end{aligned}
\end{equation*} it follows that
\begin{equation*}
\begin{aligned}
\Delta &= MI \times \frac{\exp(H(X|Y)) - \exp(H(Y|X)))}{\exp(H(X|Y)) + \exp(H(Y|X)))} \\
&= MI \times \frac{\exp(H(X)) - \exp(H(Y)))}{\exp(H(X)) + \exp(H(Y)))}.
\end{aligned}
\end{equation*}  We may rewrite $\Delta$ as $g(MI, H(X), H(Y))$. Using the multivariate delta method \citep{billingsley}, we get 
\begin{align*}
    \sqrt{n_2} \left(g(\widehat{MI}, \hat{H}(X), \hat{H}(Y)) - g({MI}, {H}(X), {H}(Y)) \right) \\ \overset{\mathcal{D}} {\rightarrow}  N\left(0, \sigma^2\right), \text{ as } n_1 \wedge n_2 \rightarrow \infty, 
\end{align*} $\sigma^2 = \left\{ \nabla g\left(MI, H(X), H(Y)\right) \right\}^T \Sigma \left\{ \nabla g \left(MI, H(X), H(Y)\right) \right\}$ and $\nabla g$ denotes the vector of gradients for $g$ with respect to its arguments. A little algebra yields 
\begin{align*}
    \sigma^2 &= \left\{2ER(X|Y) - 1\right\}^2 \sigma_1^2 \\
    & \quad + 4 \left[ DMI(X|Y) \left\{1 - ER(X|Y)\right\} \right]^2 \left(\sigma_2^2 + \sigma_3^2\right). 
\end{align*} This concludes the proof of Theorem \ref{thm:del_normality}. \end{proof}

%  To get the journal style of heading for an appendix, mimic the following.

\bibliographystyle{biom} 
\bibliography{references}

\begin{figure*}
\centerline{\includegraphics[width = \textwidth]{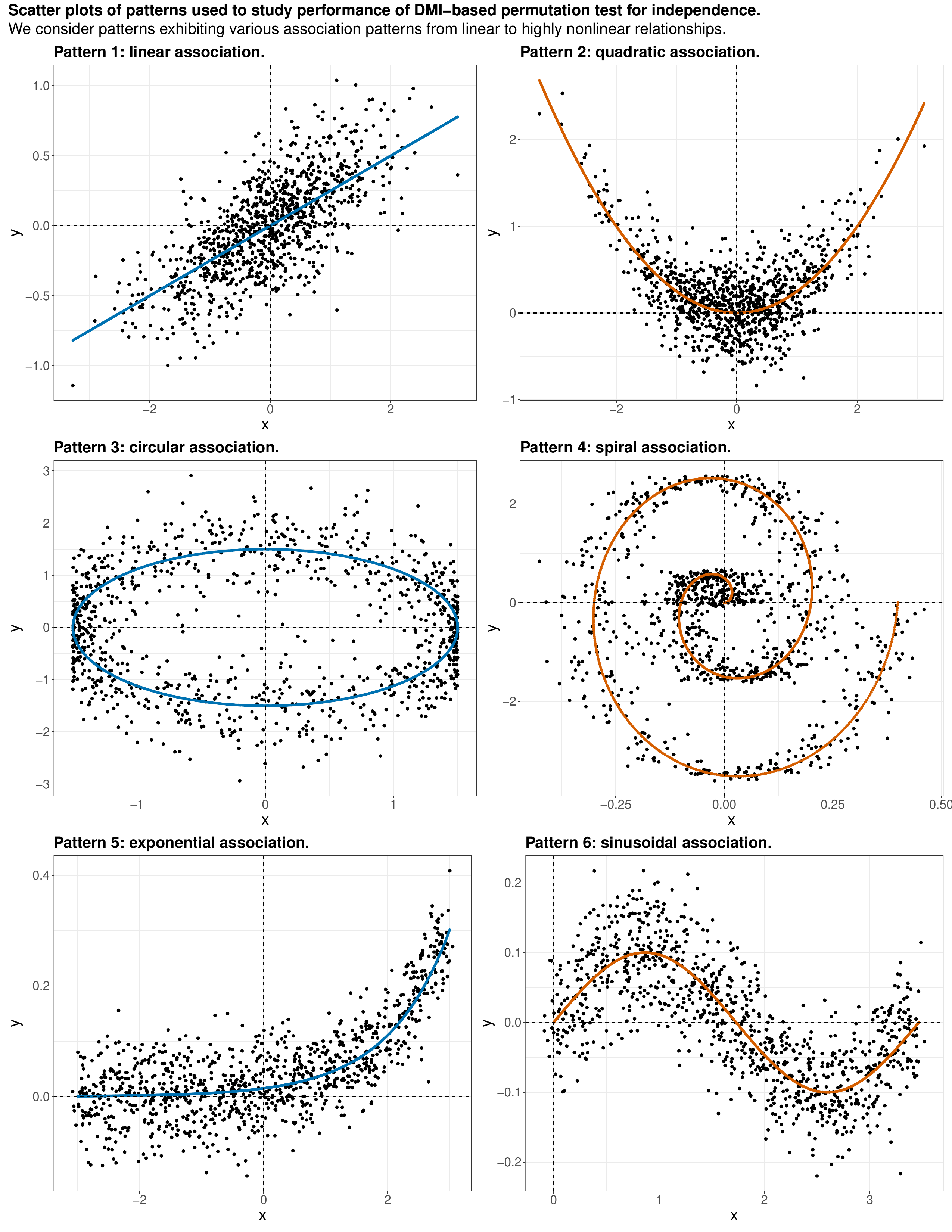}}
\caption{}
\label{fig:simulation1_scatter}
\end{figure*}

\end{document}